\newtheorem{theorem}{Theorem}
\newtheorem{conjecture}[theorem]{Conjecture}
\newtheorem{corollary}[theorem]{Corollary}
\newtheorem{definition}[theorem]{Definition}
\newtheorem{lemma}[theorem]{Lemma}
\newtheorem{proposition}[theorem]{Proposition}
\newenvironment{proof}[1][Proof]{\noindent\textbf{#1.} }{\ \rule{0.5em}{0.5em}}
\def\01{\{0,1\}}
\newcommand{\ket}[1]{|#1\rangle}
\newcommand{\bra}[1]{\langle#1|}
\newcommand{\ketbra}[2]{|#1\rangle\langle#2|}
\newcommand{\Tr}{\mbox{\rm tr}}
\newcommand{\F}{\mathbb{F}}
\newcommand{\C}{\mathbb{C}}
\newcommand{\al}{\alpha}
\begin{document}


\title{Exclusivity structures and graph representatives of local complementation orbits}


\author{
Ad\'{a}n Cabello
 \thanks{Departamento de F\'{\i}sica Aplicada II, Universidad de Sevilla, E-41012 Sevilla, Spain. \mbox{adan@us.es}}
 \and
Matthew G. Parker
 \thanks{Department of Informatics, University of Bergen, P.O. Box 7803, Bergen N-5020, Norway. \mbox{Matthew.Parker@ii.uib.no}}
 \and
Giannicola Scarpa
\thanks{CWI, Science Park 123, 1098 XG Amsterdam, the Netherlands. \mbox{g.scarpa@cwi.nl}}
\and
Simone Severini
 \thanks{Department of Computer Science, and Department of Physics \& Astronomy, University College London, WC1E 6BT London, United Kingdom. \mbox{simoseve@gmail.com}}
 }

\maketitle


\begin{abstract}
We describe a construction that maps any connected graph $G$ on three or
more vertices into a larger graph, $H(G)$, whose independence number is strictly smaller
than its Lov\'asz number which is equal to its fractional packing
number. The vertices of $H(G)$ represent all possible events
consistent with the stabilizer group of the graph state associated to
$G$, and exclusive events are adjacent. Mathematically, the graph
$H(G)$ corresponds to the orbit of $G$ under local complementation.
Physically, the construction translates into graph-theoretic terms the
connection between a graph state and a Bell inequality maximally
violated by quantum mechanics. In the context of zero-error information theory, the construction suggests a protocol
achieving the maximum rate of entanglement-assisted
capacity, a quantum mechanical analogue of the Shannon capacity, for each $H(G)$.
The violation of the Bell inequality is expressed by the one-shot version of this capacity being strictly larger than the
independence number. Finally, given the correspondence between graphs
and exclusivity structures, we are able to compute the
independence number for certain infinite families of graphs with the use of quantum non-locality, therefore highlighting an application of quantum theory in the proof of a purely combinatorial statement.
\end{abstract}


\section{Introduction}


Partitioning a phase space into orbits is a central step in the study of any
physical or formal dynamics. Immediately after the introduction of graph
(/stabilizer) states in quantum coding theory \cite{CRSS96, gottesman97,sli01} and in the
context of measurement-based quantum computation \cite{raussendorf03}, it
was evident that the related task, when considering the dynamics at the
subsystems level, requires approaches of combinatorial flavour.
While substantial attention has been given to orbits obtained by the
application of local unitaries (with a clear motivation coming from the
classification of multipartite entanglement \cite{hein03}), the question to
decide whether two graph states are equivalent under the action of the local
Clifford group has been settled, by showing \cite{vandennest04} that the equivalence classes are
in one-to-one correspondence with local complementation orbits (also called \emph{Kotzig orbits} \cite{kotzig68}). In
other words, the existence of a sequence of local complementations relating
the associated graphs guarantees equivalence under local Clifford operations
and \emph{viz}.

Even though this link does not embrace full local unitary equivalence, having now counterexamples to the LU-LC conjecture \cite%
{ji10, sarvepalli10}, it unveils a rich interface between the structure of
useful multi-qubit systems and a number of mathematical ideas. Indeed, local
complementation (or, equivalently, $\kappa $-transformation) is a
fundamental operation for studying circle graphs \cite{kotzig68}. This notion has
been instrumental for unifying certain properties of Eulerian tours and
matroids via isotropic systems \cite{bouchet88}, constructs associated to vector spaces over $GF(2)$; and it appears in string
reconstruction problems (related to DNA sequencing) and graph polynomials
\cite{arratia04}.

Given an equivalence class induced by local complementation, in the present
work we shall describe a method for constructing a larger graph associated
to the equivalence class. The method makes use of the stabilizer group of an arbitrary
graph state from the class.
Each of these graphs is identified with an exclusivity structure and a
related \emph{non-contextuality inequality} (for short, NC 
inequality). 
Such an inequality is an upper bound on the sum of probabilities of a set of events, with some exclusivity constraints (a technical discussion about events and exclusivity will be given in Section \ref{con}).
NC inequalities are satisfied by any
non-contextual hidden variable theory, \emph{i.e.}, any physical theory for which the probability of seeing an event is independent of the choice of measurements. Quantum mechanics or more general theories can violate such inequalities. For more details, see \cite{cabello10}.
In this reference, the graph (and more generally an hypergraph), whose
vertices are events, is employed to characterize the correlations for classical and
general probabilistic theories satisfying that the sum of probabilities of pairwise exclusive events cannot be larger than 1. The maximum values for the three physical theories:
classical, general, and quantum, were computed through the three well-known combinatorial parameters: the independence number, the
fractional packing number, and the Lov\'{a}sz number, respectively. As a consequence, it becomes evident that quantum and general probabilistic correlations satisfying that the sum of probabilities of pairwise exclusive events cannot be larger than 1 have semidefinite and linear characterizations, respectively. Quantum mechanics is \emph{sandwiched} between the other two theories.

The framework introduced in \cite{cabello10} permits to quantitatively discuss classical, quantum, and
more general theories through the analysis of a single mathematical object and to have a general technique to single out quantum correlations with \emph{ad hoc} degree of contextuality. For example, a generic graph with independence number strictly smaller than the Lov\'{a}sz
number is associated to a NC inequality violated by quantum
mechanics. If, in addition, the graph has equal Lov\'{a}sz and fractional packing number, then it can be
associated to a NC inequality that is \textquotedblleft
maximally violated\textquotedblright\ by quantum mechanics, meaning that no
general probabilistic theory satisfying that the sum of probabilities of pairwise exclusive events cannot be larger than 1 can achieve a larger value. (Of course, there are graphs for which all theories coincide, as, for example, perfect graphs.)

In the present paper, we propose a construction that translates into the combinatorial language developed in \cite{cabello10} the connection between every graph state of three or more qubits and a Bell inequality maximally violated by quantum mechanics found in \cite{gunhe04}. Namely, we describe a construction that maps any graph on three or more vertices $G$ into a larger graph, $H(G)$, such that its independence number is strictly smaller than its Lov\'asz number which is equal to its fractional packing number. The vertices of $H(G)$ represent all possible events consistent with the stabilizer group of the graph state associated to $G$ and exclusive events are adjacent.

The construction has also applications in zero-error information theory. It leads to a straightforward protocol achieving the maximum rate of zero-error entanglement-assisted capacity \cite{CLMW10, DSW10}. We conjecture that this quantity for a graph $H(G)$ is always strictly larger than its Shannon capacity. A proof of this statement would possibly require a rank bound \emph{a la Haemers}. While it is difficult to compute this bound in general, it may be easier in our case, since $H(G)$ has a very particular structure because of the connection with the stabilizer group. The violation of the Bell inequality is here expressed by the one-shot version of this capacity being strictly larger than the independence number. The correspondence between graphs and exclusivity structures allows us compute the independence number of the graphs $H(K_{n})$, where $K_{n}$ is the complete graph, by taking advantage of well-known techniques used in quantum non-locality.

Since two graphs yield the same (up to isomorphism) graph in our construction if and only if they are equivalent under local complementation, the construction can be interpreted as a method to represent local complementation orbits. Somehow this is in analogy with the notion of a two-graph, a well-studied mathematical object which represents equivalence classes under the operation of switching (see Ch. 11 of \cite{GR01}).

Our work is innovative with respect to \cite{cabello10} and \cite{gunhe04} in many ways. First, we present a characterization of local complementation orbits, a result of pure combinatorial nature. Our representative graphs are obtained via a construction inspired by quantum mechanics. We find tools to analyze the properties of such graphs in \cite{gunhe04}. Second, by using the results in \cite{cabello10}, we discover that local complementation orbits are naturally associated to Bell inequalities. We improve the mathematical representation of such Bell inequalities by writing down an explicit operational form, namely, a \emph{pseudo-telepathy game}. The form that we have introduced is often easier to use in both theoretical purposes and the design of laboratory experiments. Third, we introduce a novel connection between the results presented in \cite{cabello10} and \cite{gunhe04}, and zero-error information theory. Such a connection is also interesting on its own, since it provides a way to construct families of channels with a separation between classical and quantum capacities starting from \emph{any} graph.

The remainder of the work is organized as follows. The next section introduces the required terminology and notions: the language of graph theory, non-locality, and channel capacities. The construction is described in Section 3. Section 4 discusses the relevant graph-theoretic parameters. Section 5 contains examples. We highlight that physical arguments can be useful to consider difficult tasks such as computing the independence number. Section 6 is devoted to zero-error capacities. We show that our construction produces infinite families of graphs for which the use of entanglement gives the maximum possible zero-error capacity. Section 7 classifies graphs (or local complementation orbits) according to the objects obtained via the construction. We point out a link with Boolean functions and propose a conjecture about connectedness of the graph representatives.


\section{Preliminaries}


\subsection{Graphs, graph parameters and graph states}


A (simple) \emph{graph} $G=(V,E)$ is an ordered pair:\ $V(G)$ is a set whose elements are
called \emph{vertices}; $E(G)\subset V(G)\times V(G)$ is a set whose
elements are called \emph{edges}. The set $E(G)$ does not contain an edge of
the form $\{i,i\}$, for every $i\in V(G)$. The vertices forming an edge are said
to be \emph{adjacent}. We denote by $\mathcal{N}(i)$ the \emph{%
neighbourhood} of the vertex $i$, \emph{i.e.}, $\mathcal{N}(i)=\{j\in
V\mid (i,j)\in E\}$. An \emph{independent set} in a graph $G$ is a set of
mutually non-adjacent vertices. The \emph{independence number} of a graph $G$,
denoted by $\alpha (G)$, is the size of the largest independent set of $G$.
A \emph{subgraph} $H=(V,E)$ of a graph $G=(V,E)$, is a graph such that $%
V(H)\subseteq V(G)$ and $E(H)\subseteq E(G)$.
An \emph{induced subgraph} $H=(V,E)$ of a graph $G=(V,E)$ with respect to
$V(H)\subseteq V(G)$ is a graph with vertex set $V(H)$ and edge set
$E(H) = \{ \{i,j\} \mid i,j\in V(H) \mbox{ and } \{i,j\} \in E(G) \} $.
A \emph{clique} in a graph $G$ is a subgraph whose vertices are all adjacent to each other.

An \emph{orthogonal representation} of $G$ is a map from $V(G)$
to $\mathbb{C}^k$ for some $k$, such that adjacent vertices are mapped to
orthogonal vectors. An orthogonal representation is \emph{faithful} when vertices
$u$ and $v$ are mapped to orthogonal vectors if and only if $\{u,v\} \in E(G)$.
The \emph{Lov\'asz number} \cite{Lovasz79} is defined as follows:
\begin{equation}
\vartheta (G)=\max \sum_{i=1}^{n}|\langle \psi |v_{i}\rangle |^{2},
\label{lovasz}
\end{equation}%
where the maximum is taken over all unit vectors $\psi$ and all orthogonal representations, $\{v_i\}$, of $G$.
The \emph{fractional packing number} is defined by the following linear program:
\begin{equation} \label{alphastar}
\alpha ^{\ast }(G,\Gamma)=\max \sum_{i\in V}w_{i},
\end{equation}%
where the maximum is taken over all $0 \leq w_i\leq 1$ under the restriction $\sum_{i \in C_j} w_i \leq 1$ \cite{SU97}, and for all cliques $C_j \in \Gamma$,
where $\Gamma$ is the set of all cliques of $G$.
In this paper, by fractional packing number, we mean $\alpha^* (G,\Gamma)$ and denote it as $\alpha ^{\ast }(G)$.

Given a graph $G=(V,E)$, the \emph{graph state} $|G\rangle $ (see, for example, \cite{hein03, sli01})
associated to $G$ is the unique $n$-qubit state such that
\begin{equation}
g_{i}|G\rangle =|G\rangle \text{ for }i=1,\ldots ,n, \label{graphdef}
\end{equation}%
where $g_{i}$ is the generator labeled by a vertex $i\in V$ of the
stabilizer group of $|G\rangle $. A generator for $i\in V(G)$ is defined as
\begin{equation}
g_{i}=X^{(i)}\bigotimes\nolimits_{j\in \mathcal{N}(i)}Z^{(j)},
\label{stabdef}
\end{equation}%
where $X^{(i)}$, $Y^{(i)}$, and $Z^{(i)}$ denote the Pauli matrices
(sometimes denoted as $\sigma _{x}$, $\sigma _{y}$, and $\sigma _{z}$)
acting on the $i$-th qubit. Therefore, $g_{i}$ can be obtained directly and univocally
from $G$.
The stabilizer group of the state $|G\rangle $ is the set $S$ of the
stabilizing operators $s_{j}$ of $|G\rangle $ defined by the product of any
number of generators $g_{i}$. Note that, for convenience, we shall remove
the identity element from $S$. Therefore, the set $S$ contains $2^{n}-1$
elements.

Given a graph $G = (V,E)$, the operation of \emph{local complementation} (LC) on $i\in V$ transforms $G$ into a graph $G^i$ on the same set of vertices. To obtain $G^i$, we replace the induced subgraph of $G$ on ${\mathcal{N}}(i)$ by its complement. It is easy to verify that $(G^i)^i = G$. The set of graphs is partitioned into \emph{LC orbits} (also known as \emph{Kotzig orbits}) by the repeated action of local complementation on each graph \cite{bouchet88}. The LC orbits are then equivalence classes.


\subsection{Non-locality}


We assume familiarity with the basics of quantum information theory. The reader can find a good introduction
in \cite[Chapter 2]{NielsenChuang}.
A \emph{non-local game} is an experimental setup between a referee and two players, Alice and Bob. (It can also be defined with more players, but we do not consider this case here.)
The game is not adversarial, but the players collaborate with each other. They are allowed to arrange a strategy beforehand, but they are not allowed to communicate during the game. The referee sends Alice an input $x\in X$ and sends Bob an input $y\in Y$,
according to a fixed and known probability distribution $\pi$ on $X \times Y$. Alice and Bob answer with $a\in A$ and $b\in B$ respectively, and the referee declares the outcome of the game according
to a verification function $V: A\times B\times X\times Y \rightarrow \{$win $=1$, lose $=0\}$.
So, the non-local game is completely specified by the sets $X,Y,A,B$, the distribution $\pi$, and the verification function $V$.

A \emph{classical strategy} is \emph{w.l.o.g.}\ a pair of functions $s_A : X\rightarrow A$ and $s_B: Y\rightarrow B$ for Alice and Bob, respectively. A \emph{quantum strategy} consists of a shared bipartite entangled state $\ket{\psi}$ and POVMs
$\{P^x_a\}_{a\in A},$ for every $x\in X$ for Alice and $\{P^y_b\}_{b\in B},$ for every $y\in Y$ for Bob. On input $x$, Alice uses the positive operator valued measurement (POVM) $\{P^x_a\}_{a\in A}$ to measure her part of the entangled state and Bob does similarly on his input $y$. Alice (resp.\ Bob) answers with $a$ (resp.\ $b$) corresponding to the obtained measurement outcome. Therefore, the probability to output $a,b$ given $x,y$ is $\Pr(a,b|x,y) = \bra{\psi} P^x_a \otimes P^y_b \ket{\psi} $. The classical and quantum values (or winning probabilities) for the game are:
\begin{subequations}
\begin{align}
\omega_c &= \max_{s_A, s_B} \sum_{x,y,a,b} \pi(x,y) V(s_A(x),s_B(y),x,y),\\
\omega_q &= \max_{\ket{\psi}, \{P^x_a\}, \{P^y_b\} } \sum_{x,y,a,b} \pi(x,y) V(a ,b,x,y) \bra{\psi} P^x_a \otimes P^y_b \ket{\psi}.
\end{align}
\end{subequations}

A \emph{Bell inequality} for a non-local game is a statement of the form $ \omega_c \leq t $ for $t \in [0,1]$. It is violated by quantum mechanics if $ \omega_q > t$.
A non-local game is called a \emph{pseudo-telepathy} game if $\omega_c < \omega_q = 1 $, \emph{i.e.}\ quantum players win with certainty, while classical players have nonzero probability to lose.

Non-local games are a special form of Bell experiment. In general, a \emph{Bell operator} $\mathcal{B}$ is a linear combination of observables and a Bell inequality is a statement of the form
\begin{equation}
\max | \langle\mathcal{B}\rangle | \leq t,
\end{equation}
where the maximum runs over classical states. A quantum state is said to \emph{violate} the Bell inequality if $|\langle\mathcal{B}\rangle| > t$.


\subsection{Channel capacity}


Zero-error information theory was initiated in \cite{shannon}; a review is \cite{korner}. A \emph{classical channel} $\mathcal{C}$ with input set $X$ and output
set $Y$ is specified by a conditional probability distribution $\mathcal{C}(y|x)$, the probability to produce output $y$ upon input $x$. (Precisely this is a discrete, memoryless, stationary channel.)
Two inputs $x,x'\in X$ are \emph{confusable} if
there exists $y\in Y$ such that $\mathcal{C}(y|x)>0$ and $\mathcal{C}(y|x')>0.$
We then define the \emph{confusability graph of channel $\mathcal{C}$}, $G(\mathcal{C})$, as
the graph with vertex set $X$ and edge set $\{(x,x'):x,x'\mbox{ are distinct and confusable}\}$.

The \emph{one-shot zero-error capacity} of $\mathcal{C}$, $c_0(\mathcal{C})$,
is the size of a largest set of non-confusable inputs. This is just the independence number $\alpha(G(\mathcal{C}))$ of the confusability
graph. In the entanglement-assisted setting, the sender (Alice) and receiver (Bob) share an entangled state $\rho$ and can perform local quantum measurements on their part of $\rho$.

The general form of an entanglement-assisted protocol used by Alice to send one out of $q$ messages to
Bob with a single use of the classical channel $\mathcal{C}$ can be described as follows (also see \cite{CLMW10}).
For each message $m\in [q]$, Alice has a POVM $\mathcal{E}^m = \{E_{1}^{m},\dots,E_{|X|}^{m}\}$ with $|X|$ outputs.
To send message $m$, she measures her subsystem
using $\mathcal{E}^m$ and sends through the channel the
observed $x\in X$.
Bob receives some $y\in Y$ with $\mathcal{C}(y|x)>0$.
If the right condition holds (as we will explain below), Bob can recover $m$ with certainty
using a projective measurement on his subsystem.

It is not hard to state a necessary and sufficient condition for the success of the protocol.
If Alice gets outcome $x\in X$ upon measuring $\mathcal{E}^m$,
Bob's part of the entangled state collapses to
$\beta_{x}^{m}=\Tr_{A}((E_{x}^{m}\otimes I)\rho)$. Given the channel's output $y$, Bob can recover $m$ if and only if
\begin{equation}
\forall m\neq m',\forall\mbox{ confusable }x,x'\ \Tr(\beta_{x}^{m}\beta_{x'}^{m'})=0.\label{eq:condition}
\end{equation}
Bob can recover the message with a projective measurement on the mutually orthogonal supports of
\begin{equation}
\sum_{x \;:\; \mathcal{C}(y|x)>0} \beta_{x}^{m},
\end{equation}
for all messages $m$.
In such a case we say that, assisted by the entangled state $\rho$, Alice can use the POVMs
$\mathcal{E}^1,\dotsc,\mathcal{E}^q$ as her strategy for sending one out of $q$ messages with a single use of $\mathcal{C}$.

The \textit{entanglement-assisted one-shot zero-error channel capacity }of $\mathcal{C}$, $c_0^*(\mathcal{C})$, is the maximum integer $q$ such that
there exists a protocol for which condition \eqref{eq:condition} holds.

We are now ready to outline the setting where Alice and Bob share a maximally entangled state in the above protocol. We will refer to this particular case later in section \ref{sec:capacity}.
Let the (canonical) maximally entangled state of local dimension $n$ be defined as follows:
\begin{equation}
\ket{\Psi} := \frac{1}{\sqrt n} \sum_{i\in [n]} \ket{i}\ket{i},
\end{equation}
where $\{\ket{i}\}_{i\in [n]}$ is the standard basis of $\C^n$.
When Alice and Bob share a maximally entangled state and Alice performs a projective measurement observing $P_a$,
Bob's part of the state collapses to $\Tr_{A}((P_a\otimes I)\ketbra{\Psi}{\Psi}) = P_a^\top /n$. This implies that Bob can distinguish between $P_a$ and $P_b$ perfectly if and only if $\Tr(P_a P_b)=0$. Therefore, if Alice uses projective measurement $\{P^m_x\}_{x\in X}$ for message $m$ and players share a maximally entangled state, then Condition
\eqref{eq:condition} is true if and only if
\begin{equation}
\forall m\neq m',\forall\mbox{ confusable }x,x'\ \Tr(P_{x}^{m}P_{x'}^{m'})=0.\label{eq:condition2}
\end{equation}

Considering more than a single use of the channel, one can define the asymptotic zero-error channel capacity $\Theta(\mathcal{C})$ and the asymptotic entanglement-assisted zero-error channel capacity $\Theta^*(\mathcal{C})$ by
$ \displaystyle \Theta(\mathcal{C}) :=
    \lim_{k\rightarrow\infty} (c_0(\mathcal{C}^{\otimes k}))^{1/k}
 $ and $
\displaystyle
 \Theta^*(\mathcal{C}) :=
    \lim_{k\rightarrow\infty} (c_0^*(\mathcal{C}^{\otimes k}))^{1/k}.
$

Since $c_0^*(\mathcal{C})$ depends solely on the confusability graph $G(\mathcal{C})$ \cite{CLMW10}, we can talk about $c_0^*(G)$ for a graph $G$, meaning the entanglement-assisted one-shot zero-error capacity of a channel with confusability graph $G$. Similarly we can talk about quantities $c_0(G), \Theta(G),$
and $\Theta^*(G)$.


\section{Construction}\label{con}


Let $G$ be a graph on $n$ vertices and consider the $n$-qubit graph state $|G\rangle $.
Let $S$ be the stabilizer group of $G$.
For each $s_j \in S$, with $s_j = \bigotimes_{k=1}^n O^{(k)}$,
let $w_j = |\{ O^{(k)}: O^{(k)}\neq I \}|$ be the \emph{weight of} $s_j$.
Let $S_j = \{S_{(i,j)}: i=1,2,\ldots,2^{w_j-1}\}$ be the set of the \emph{events} of $s_j$,
\emph{i.e.} the measurement outcomes that occur with non-zero probability
when the system is in state $|G\rangle$ and the stabilizing operators $s_{j}$ are measured
with single-qubit measurements.
The set of all events is $\mathcal{S} = \bigcup_{j=1,2,\ldots,2^{n}-1} S_j$.
Two events are \emph{exclusive} if there exists a $k \in \{1,\ldots,n\}$
for which the same single-qubit measurement gives a different outcome.

A graph representing a Kotzig orbit can be naturally defined as follows:

\begin{definition}\label{def:H}
Let $G$ be a graph. Let $S$ be the stabilizer group of the graph state of $G$. We denote by $H(G)$ the graph whose vertices are the events in $S$ and the edges are all the pairs of exclusive events.
\end{definition}

We give an example for events and exclusiveness.
Let $n=3$ and $s_{2}=ZXZ$ (we omit the superscripts for simplicity).
This means that $ZXZ|G\rangle
=|G\rangle $, \emph{i.e.}, if the system is prepared in $|G\rangle $ and $%
s_{2}$ is measured by measuring $Z$ on the first qubit (with possible
results $-1$ or $1$), $X$ on the second qubit, and $Z$ on the third qubit,
then the product of the three results must be $1$. Therefore,
$S_{2}= \{zxz,z\underline{x}\underline{z}, \underline{z}x\underline{z}, \underline{z}\underline{x} z\}$,
where hereafter $z\underline{x}\underline{z}$ denotes the event \textquotedblleft the result 1
is obtained when $Z$ is measured on qubit 1, the result $-1$ is obtained
when $X$ is measured on qubit 2, and the result $-1$ is obtained when $Z$ is
measured on qubit 3\textquotedblright . As another example: if $n=2$ and $s_{1}=XZ
$, then $S_{1}=\{zx, \underline{z}\underline{x}\}$.

We now give an example of a graph representing a Kotzig orbit.
Let us consider $P_{3}=(\{1,2,3\},\{\{1,2\},\{2,3%
\}\})$, the path on three vertices. We construct $H(P_{3})$. The stabilizer
group $S$ (minus the identity)\ has the following elements: $s_{1}=g_{1}=XZI$%
, $s_{2}=g_{2}=ZXZ$, $s_{3}=g_{3}=IZX$, $s_{4}=g_{1}g_{2}=YYZ$, $%
s_{5}=g_{1}g_{3}=XIX$, $s_{6}=g_{2}g_{3}=ZYY$, and $%
s_{7}=g_{1}g_{2}g_{3}=-YXY$. For all $j=1,\ldots ,2^{3}-1$, obtain all
possible events (\emph{i.e.}, those which can happen with non-zero
probability) when three qubits are prepared in the state $|G\rangle $ and
three parties measure the observables corresponding to $s_{j}$. For
instance, when $j=1$, Alice measures $X^{(1)}$, Bob measures $Z^{(2)}$, and
Charlie does not perform any measurement . Since the three qubits are in
state $|G\rangle $, there are only two possible outcomes: Alice obtains $%
X^{(1)}=+1$ and Bob obtains $Z^{(2)}=+1$, denoted as $xzI$;
or Alice obtains $X^{(1)}=-1$ and Bob obtains $Z^{(2)}=-1$, denoted as $\underline{x}\underline{z}I$.
For $j=2$, the only events that can occur
are $zxz$, $z\underline{x}\underline{z}$, $\underline{z}x\underline{z}$, and
$\underline{z}\underline{x}z$. The other events for the remaining $j$'s are
obtained in a similar way.
Now, let us construct the graph $H(P_{3})$: the vertices represent possible
events; two vertices are adjacent if and only events are exclusive
(\emph{e.g.}, $xzI$ and $\underline{x}I \underline{x}$).
Notice that each $s_{j}$ of weight $w_j$ generates $2^{w_j-1}$ vertices.
A drawing of $H(P_{3})$ is in Fig. \ref%
{H1}.


\begin{figure}[t]
\centerline{\includegraphics[width=6.6cm]{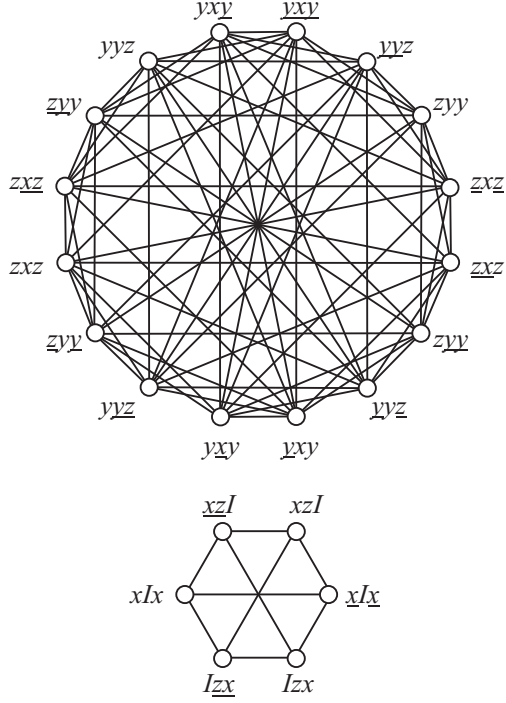}}
\caption{The graph $H(P_3)$ associated to the path on three vertices, $P_3$, consists of two
connected components: the upper component in the drawing is $CS$, the complement of the Shrikhande
graph \cite{Shrikhande59}; the lower component is $Ci_8(1,3)$, the 6-vertex (1,3)-circulant graph. We have $%
\protect\alpha(CS)=3$, $\protect\vartheta(CS)=\protect\alpha^* (CS)=4$, $%
\protect\alpha[Ci_8(1,3)]=\protect\vartheta[Ci_8(1,3)]=\protect\alpha^*
[Ci_8(1,3)]=3$. Therefore, $\protect\alpha(H(P_3))=6$, while $\protect\vartheta%
(H(P_3))=\protect\alpha^* ((P_3))=7$.}
\label{H1}
\end{figure}


Each $H(G)$ can be interpreted as in \cite{cabello10}. Every graph is in fact associated to an NC inequality and is constructed by expressing the linear combination of joint probabilities of events in the NC inequality as a {\em sum} $S$. For a graph in \cite{cabello10}, an event in $S$ is represented by a vertex and exclusive events are represented by edges. Constructing such a graph from the inequality is straightforward, when the absolute values of the coefficients in the linear combination are natural numbers (which, to our knowledge, is always the case for all relevant NC inequalities). As already mentioned in the introduction, this graph-theoretic framework can be used to single out games with \emph{ad hoc} quantum advantage and quantum correlations with \emph{ad hoc} degree of contextuality (see \cite{NDSC12, ADLPBA11}, respectively).


\subsection{Local complementation orbits}
\label{Kotzig}


If we apply the method to graphs $G$ and $G^{\prime }$ in the
same orbit under local complementation \cite{hein03, CLMP09} then \emph{we
obtain the same graph} $H$. The reason is that the graph states $|G\rangle$
and $|G^{\prime }\rangle $ share the same set of perfect correlations (up to
relabeling), so also share the same graph in which all possible
exclusive events are adjacent.

This paper constructs $H(G)$ from $G$, as described earlier, where each of
the $2^n-1$ operators, $s_j$, generated by $G$, in turn generates a clique
of size $2^{w_j-1}$ in $H$, where $w_j$ is the weight of operator $s_j$.
In this section, we present a
classification of all $H(G)$ from all graphs $G$ for $n < 7$. This classification
is greatly simplified by the fact that if two graphs, $G$ and $G^{\prime }$,
are in the same local complementation (LC) orbit, then $H(G) = H(G^{\prime
}) $. So we need only classify for one representative from each orbit.
 A choice of representatives
for $n=2,\ldots ,6$, for connected graphs only, are listed in the
second column of table \ref{BigClass}.

The action of local complementation on vertex $v$ of graph $G$ to yield
graph $G^{\prime }$ can be realised, in the context of graph states, by a
specific local unitary action:
\begin{equation}
\begin{array}{rcl}
G^{\prime } & = & G^v \\
\Updownarrow & & \Updownarrow \\
|G^{\prime }\rangle & = & \omega^7 T^{(v)}N^{(v)}\prod_{i \in {\mathcal{N}}_v}
T^{(i)} |G\rangle,%
\end{array}%
\label{LUMap}
\end{equation}
where $N = \frac{1}{\sqrt{2}}\left (
\begin{array}{rr}
1 & i \\
1 & -i%
\end{array}
\right )$, $T = \left (
\begin{array}{rr}
1 & 0 \\
0 & i%
\end{array}
\right )$, and $\omega = e^{\frac{7\pi i}{4}}$.

\vspace{2mm}

Here, $S = \{s_j, j = 1 \ldots 2^n-1\}$ is the stabilizer group associated with the graph $G$,
where we omit $s_0$ for convenience.
Similarly, let $S' = \{s'_j, j = 1 \ldots 2^n-1\}$ be the stabilizer group associated with $G' = G^v$.
We show how to obtain $S'$ from $S$. Let $s_j = (-1)^{c_j}\prod_{k=1}^n O_j^{(k)}$, where
$O_j \in \{I,X,Z,Y\}$, and $c_j \in \{0,1\}$.

\vspace{2mm}

Define the mapping ${\cal{L}}_v : \{I,X,Z,Y\} \rightarrow \{I,X,Z,Y\}$ as follows:
\begin{equation}
\begin{array}{lll}
{\cal{L}}_v : & X^{(k)} \rightarrow X^{(k)}, Z^{(k)} \rightarrow Y^{(k)}, Y^{(k)} \rightarrow Z^{(k)}, & k = v \\
       & X^{(k)} \rightarrow Y^{(k)}, Z^{(k)} \rightarrow Z^{(k)}, Y^{(k)} \rightarrow X^{(k)}, & k \in {\mathcal{N}}_v \\
       & I \rightarrow I, X \rightarrow X, Z \rightarrow Z, Y \rightarrow Y, & otherwise.
\end{array}
\end{equation}

Moreover, define $y_v(s_j)
 = |\{k \hspace{2mm} | \hspace{2mm} O_j^{(k)} = Y^{(k)} \mbox{ and } k \in v \cup {\mathcal{N}}_v \}|$. In
words, $y_v(s_j)$ is the total number of $Y$ matrices at tensor positions $v \cup {\mathcal{N}}_v$
of $s_j$.

\begin{lemma}
\label{LCPerm}
The action of LC at vertex $v$ of $G$ maps $G$ to $G'$ and $S$ to $S'$, where
\begin{equation}
 s_j' = (-1)^{c_j + y_v(s_j)} \prod_{k=1}^n {\cal{L}}_v(O_j^{(k)}), \hspace{2mm} j = 0 \ldots 2^n-1.
\end{equation}
This action is a permutation, $(I)(X)(ZY)$, of the Pauli matrices at tensor position $v$ of each $s_j$
and a permutation, $(I)(Z)(XY)$, of the Pauli matrices at tensor positions in ${\mathcal{N}}_v$
of each $s_j$, followed by a global multiplication by $(-1)^{y_v(s_j)}$.
\end{lemma}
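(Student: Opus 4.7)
The plan is to lift the combinatorial statement about local complementation on $G$ to a unitary statement on graph states. Equation~(\ref{LUMap}) gives an explicit local Clifford $U = \omega^7 T^{(v)} N^{(v)} \prod_{i \in {\mathcal{N}}_v} T^{(i)}$ with $|G^v\rangle = U |G\rangle$ up to a global phase. Then for every $s_j \in S$ one has $(U s_j U^\dagger)|G^v\rangle = U s_j |G\rangle = U|G\rangle = |G^v\rangle$, so $s'_j := U s_j U^\dagger$ is a stabiliser of $|G^v\rangle$; since conjugation is a group isomorphism and $|S|=|S'|=2^n-1$, the map $s_j \mapsto s'_j$ enumerates $S'$. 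It remains only to rewrite $U s_j U^\dagger$ in the signed-Pauli normal form of the lemma.

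Next I would exploit the tensor structure of $U$: the global phase $\omega^7$ cancels in conjugation, and $U$ acts as the identity on every tensor position outside $v \cup {\mathcal{N}}_v$. Hence $U s_j U^\dagger$ factorises across tensor positions. Positions $k \notin v \cup {\mathcal{N}}_v$ contribute $O_j^{(k)}$ unchanged; positions $k \in {\mathcal{N}}_v$ contribute $T O_j^{(k)} T^\dagger$; and position $v$ contributes $(TN) O_j^{(v)} (TN)^\dagger$. This reduces the lemma to two single-qubit Pauli-conjugation calculations.

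Those two calculations I would carry out directly from the explicit forms of $T$ and $N$ given in (\ref{LUMap}). The $T$-conjugation realises the permutation $(I)(Z)(XY)$ of Paulis, with a sign $-1$ emitted precisely when the input is $Y$ (because $T Y T^\dagger = -X$). The $TN$-conjugation realises $(I)(X)(ZY)$, once again with a sign $-1$ emitted precisely on input $Y$ (since $(TN) Y (TN)^\dagger = -Z$). Each occurrence of $Y$ at a position in $v \cup {\mathcal{N}}_v$ therefore contributes exactly one factor of $-1$; all other positions contribute no sign. Reading the unsigned permutations off yields the map ${\mathcal{L}}_v$ as stated, and the accumulated sign is $(-1)^{y_v(s_j)}$, which combined with the pre-existing $(-1)^{c_j}$ in $s_j$ gives the claimed formula for $s'_j$.

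The main obstacle is simply the careful bookkeeping of those two Pauli signs; the overall structure is forced by the Clifford character of $U$. Convenient sanity checks are that ${\mathcal{L}}_v^2$ is the identity on $\{I,X,Y,Z\}$, consistent with $(G^v)^v = G$, and that $U s_j U^\dagger$ is manifestly a $\pm 1$-signed tensor of Paulis, ruling out spurious imaginary phases and confirming that the sign function of $s'_j$ lands in $\{0,1\}$ as required.
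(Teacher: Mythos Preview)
Your proposal is correct and follows essentially the same approach as the paper: both use the local unitary from (\ref{LUMap}) and reduce the lemma to computing the single-qubit conjugations $T\,O\,T^\dagger$ (for positions in ${\mathcal N}_v$) and $(TN)\,O\,(TN)^\dagger$ (for position $v$), obtaining exactly the permutations $(I)(Z)(XY)$ and $(I)(X)(ZY)$ with the $-1$ sign appearing precisely on input $Y$. Your write-up is simply more explicit than the paper's two-line proof, in particular in justifying why $s_j \mapsto U s_j U^\dagger$ enumerates $S'$ and in noting that the global phase $\omega^7$ drops out under conjugation.
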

For example, consider the graph $G$ with two edges $\{1,2\}$ and $\{1,3\}$. Then $G' = G^1$ is the graph with edges $\{1,2\},\{1,3\},\{2,3\}$ (so the star $ST_3$ and the complete graph $K_3$
are in the same Kotzig orbit). We have that $S(G) = \{XZZ,ZXI,YYZ,ZI X,YZY,IXX,-XYY\}$ and that \newline
$S(G^1) = S' = \{XZZ,ZXZ,YYI ,ZZX,YI Y,I YY,-XXX\}$. For example $s_j = YYZ$ is mapped to
$s_j' = (-1)^{y_1(s_j)}ZXZ = ZXZ$, where $v = 1$, ${\mathcal{N}}_v = \{2,3\}$, and $y_1(s_j) = 2$
as $Y$ occurs
at tensor positions $1$ and $2$ of $s_j$, where $1,2 \in v \cup {\mathcal{N}}_v$.
\\

\begin{proof}
We use (\ref{LUMap}).
For vertex $v$ we replace $U$ with $\omega^7 TNUN^{\dag}T^{\dag} \omega$,
for each of $U \in \{X,Z,Y\}$ to obtain $\{X,Y,-Z\}$. Similarly, for vertices
in ${\mathcal{N}}_v$ we replace $U$ with $\omega^7 TUT^{\dag} \omega$, for
each of $U \in \{X,Z,Y\}$ to obtain $\{Y,Z,-X\}$.
\end{proof}

\begin{theorem}
\label{LCInvariance}Let $G^{L}$ be the Kotzig orbit of graphs generated by the
action of successive local complementation on $G$. Then
\begin{equation}
H(G^{\prime })=H(G^{\prime \prime }),\hspace{5mm}\forall G^{\prime
},G^{\prime \prime} \in G^L.
\end{equation}
\end{theorem}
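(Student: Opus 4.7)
The plan is to reduce the statement to a single step of local complementation, and then build a bijection between events that preserves exclusivity. Since any two graphs in the orbit $G^L$ are connected by a sequence of single-vertex LC moves, it suffices to prove $H(G)=H(G^v)$ for an arbitrary vertex $v$, the general case following by transitivity. Lemma~\ref{LCPerm} already provides an explicit bijection $s_j\mapsto s_j'$ between the stabilizer groups $S$ and $S'$ of $|G\rangle$ and $|G^v\rangle$, where the Pauli label at each tensor position is permuted by $\mathcal{L}_v$ and $s_j'$ carries an extra global sign $(-1)^{y_v(s_j)}$. The underlying reason, used in the proof of Lemma~\ref{LCPerm}, is that $|G^v\rangle=U|G\rangle$ for the local Clifford $U=\omega^{7}T^{(v)}N^{(v)}\prod_{i\in\mathcal{N}_v}T^{(i)}$, so $s_j'=Us_jU^{\dagger}$ and the conjugation acts qubit-locally.

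The bijection $\phi:\mathcal{S}\to\mathcal{S}'$ on events is the one induced by this qubit-local conjugation. Concretely, at each position $k$ the conjugation sends $O_j^{(k)}$ to $\epsilon_k\,\mathcal{L}_v(O_j^{(k)})$ for some sign $\epsilon_k\in\{\pm 1\}$ that depends only on the Pauli $O_j^{(k)}$ and on whether $k=v$, $k\in\mathcal{N}_v$, or $k\notin v\cup\mathcal{N}_v$; crucially, $\epsilon_k$ does not depend on which stabilizer $s_j$ we are in. Since measuring $\mathcal{L}_v(O_j^{(k)})$ on $|G^v\rangle$ is statistically equivalent to measuring $\epsilon_k\,O_j^{(k)}$ on $|G\rangle$, I define $\phi$ by sending the event $(O_j^{(k)},e(k))_k$ to $(\mathcal{L}_v(O_j^{(k)}),\,\epsilon_k e(k))_k$. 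Well-definedness follows from $\prod_k\epsilon_k=(-1)^{y_v(s_j)}$, which matches precisely the sign factor appearing in $s_j'$ in Lemma~\ref{LCPerm}, so the product constraint defining an event of $s_j'$ is satisfied.

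It remains to verify that $\phi$ preserves exclusivity, which gives the required isomorphism of graphs. Two events are exclusive iff there is a position $\ell$ at which they prescribe the same single-qubit observable but opposite outcomes. Because the Pauli relabeling $\mathcal{L}_v$ at position $\ell$ and the sign $\epsilon_\ell$ depend only on the Pauli at position $\ell$ and not on the index $j$, two events that share a Pauli at $\ell$ are mapped by $\phi$ to two events that share the \emph{same} relabeled Pauli at $\ell$, and both outcomes at $\ell$ are multiplied by the same $\epsilon_\ell$; hence coincidences are preserved and disagreements stay disagreements. The same argument applied to $\phi^{-1}$ shows the non-exclusive case as well, so $\phi$ is a graph isomorphism $H(G)\to H(G^v)$.

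The main obstacle is the sign bookkeeping in verifying $\prod_k\epsilon_k=(-1)^{y_v(s_j)}$: one must tabulate, for each Pauli $\in\{X,Z,Y\}$ and each position type ($v$, $\mathcal{N}_v$, or outside), the sign produced by conjugation (e.g., from the proof of Lemma~\ref{LCPerm}, signs appear exactly when $Y\mapsto -Z$ at $v$ or $Y\mapsto -X$ in $\mathcal{N}_v$) and check that the total exponent counts $Y$-occurrences inside $v\cup\mathcal{N}_v$. Once this is done, the preservation of exclusivity is essentially immediate because the relabeling is position-wise and depends only on the Pauli, not on the stabilizer; this is what guarantees that $\phi$ respects the combinatorial structure of $H$.
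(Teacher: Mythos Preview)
Your proof is correct and follows essentially the same approach as the paper: both arguments reduce to a single local complementation step and invoke the local Clifford unitary of equation~(\ref{LUMap}) (via Lemma~\ref{LCPerm}) to transport events of $|G\rangle$ to events of $|G^v\rangle$, observing that because the transformation is qubit-local, the exclusivity relation is preserved. The paper's proof is a two-sentence physical sketch (``the two measurement scenarios are equivalent, so the edge relationship is preserved''), whereas you have explicitly written out the bijection $\phi$ on events, verified the sign identity $\prod_k\epsilon_k=(-1)^{y_v(s_j)}$ needed for well-definedness, and checked that $\mathcal{L}_v$ being a position-wise permutation of $\{I,X,Z,Y\}$ forces both exclusivity and non-exclusivity to be preserved; this is exactly the content the paper leaves implicit.
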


\begin{proof}
Every vertex in $H(G)$ represents a measurement, $s_j$, of $|G\rangle$, combined with a certain
measurement result, as specified by the bars under $x$, $y$, and $z$, as appropriate.
This measurement is equivalent to a measurement, $s_j'$ of $|G' \rangle$, where $s_j'$,
$|G' \rangle$, and the new measurement results are obtained from $s_j$ and
$|G\rangle$ by the same local unitary transform, namely the transform in (\ref{LUMap}). Since the
two measurement scenarios are equivalent, then the edge relationship between vertices in
$H(G)$ is preserved in $H(G')$, \emph{i.e.}, $H(G) = H(G')$. The theorem is then extended to any two
$G^{\prime },G^{\prime \prime } \in G^L$ as $G^{\prime \prime }$ can be obtained from
$G^{\prime }$ by a series of local complementations.
\end{proof}

\vspace{3mm}

Proposition \ref{LCInvariance} implies that we only have to classify for one
representative member, $G$, (arbitrarily chosen) of each Kotzig orbit, $G^L$, of
$n$-vertex graphs. Table \ref{BigClass} classifies, computationally, graphs $H(G) = (V^H,E^H)$
for $n = 2, \ldots 6$ from $G = (V,E)$, where $|G^L|$ is the size of the Kotzig
orbit of $G$ up to re-labeling (graph isomorphism).


\section{Parameters}


\begin{theorem}
\label{thm:Hparams}Let $H$ be a graph representing a Kotzig orbit. Then, for $n > 2$,
\begin{equation}
\alpha (H)<\vartheta (H)=\alpha ^{\ast }(H)=2^{n}-1.
\end{equation}
\end{theorem}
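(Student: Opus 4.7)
The plan is to squeeze $\vartheta(H)$ and $\alpha^{\ast}(H)$ between matching upper and lower bounds of $2^{n}-1$, and then invoke the non-classicality of the graph state to get the strict separation $\alpha(H)<\vartheta(H)$.

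\emph{Upper bound $\alpha^{\ast}(H)\le 2^{n}-1$.} By construction, the events in a single $S_{j}$ are distinct outcomes of one joint measurement and are therefore pairwise exclusive, so each $S_{j}$ is a clique $C_{j}$ in $H$. The sets $S_{1},\ldots,S_{2^{n}-1}$ partition $V(H)$. Summing the LP constraint $\sum_{i\in C_{j}} w_{i}\le 1$ over $j$ then gives $\sum_{i\in V(H)} w_{i}\le 2^{n}-1$, which is the desired bound.

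\emph{Lower bound $\vartheta(H)\ge 2^{n}-1$.} I would construct an orthogonal representation directly from the graph state. Take $|\psi\rangle=|G\rangle$; for each event $e\in S_{j}$ let $\Pi_{e}$ be the tensor product of the single-qubit eigenprojectors dictated by $e$ (with $I$ on the qubits outside $\mathrm{supp}(s_{j})$), and set $v_{e}=\Pi_{e}|G\rangle/\|\Pi_{e}|G\rangle\|$. If $e,e'$ are adjacent in $H$, then on some qubit they share the same Pauli with opposite outcomes, hence the corresponding one-qubit projectors are orthogonal, $\Pi_{e}\Pi_{e'}=0$, and $\langle v_{e}|v_{e'}\rangle=0$. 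Because $s_{j}|G\rangle=|G\rangle$, the $2^{w_{j}-1}$ admissible outcome strings of $s_{j}$ are equiprobable, so $\|\Pi_{e}|G\rangle\|^{2}=2^{-(w_{j}-1)}$ and $|\langle\psi|v_{e}\rangle|^{2}=2^{-(w_{j}-1)}$. Summing within $S_{j}$ yields $1$, and summing over the $2^{n}-1$ stabilizers yields $2^{n}-1$. Combining this with the sandwich inequality $\vartheta(H)\le\alpha^{\ast}(H)$ and the upper bound pins both numbers to $2^{n}-1$.

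\emph{Strict inequality $\alpha(H)<2^{n}-1$.} Because the stabilizer cliques partition $V(H)$, any independent set of $H$ has size at most $2^{n}-1$, with equality only if it contains exactly one event per stabilizer and no two of its events share a Pauli-qubit pair with disagreeing outcomes. Such a transversal is precisely a global $\pm 1$ assignment to all single-qubit Pauli observables occurring in $S$ that simultaneously respects every stabilizer relation, i.e.\ a non-contextual hidden-variable model reproducing all perfect correlations of $|G\rangle$. For $n>2$ and any connected $G$ this is ruled out by the graph-state Bell inequalities of \cite{gunhe04}, so $\alpha(H)\le 2^{n}-2$ and the strict inequality follows.

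The hard part will be the strict separation: the equalities $\vartheta(H)=\alpha^{\ast}(H)=2^{n}-1$ are essentially bookkeeping once the stabilizer partition and the graph-state orthogonal representation are in hand, whereas $\alpha(H)<2^{n}-1$ hinges on the genuinely quantum fact imported from \cite{gunhe04}. A self-contained combinatorial proof would require re-deriving the Mermin-type parity obstructions uniformly for every connected $G$ with $n>2$, which I would prefer to outsource rather than rebuild.
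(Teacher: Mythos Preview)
Your proposal is correct and follows the same three-part skeleton as the paper: clique partition for $\alpha^{\ast}(H)\le 2^n-1$, a graph-state orthogonal representation for $\vartheta(H)\ge 2^n-1$, and the Bell-inequality obstruction from \cite{gunhe04} for $\alpha(H)<2^n-1$. Two differences are worth flagging.

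For the Lov\'asz lower bound, you use $v_e=\Pi_e|G\rangle/\|\Pi_e|G\rangle\|$, whereas the paper uses its ``canonical'' representation (tensor products of Pauli eigenvectors, Definition~\ref{def:orth}). Both work, and your version is arguably cleaner: since $\Pi_e$ is a projector, $|\langle G|v_e\rangle|^2=\langle G|\Pi_e|G\rangle$, and $\sum_{e\in S_j}\langle G|\Pi_e|G\rangle=\langle G|\tfrac{I+s_j}{2}|G\rangle=1$ with no further hypothesis. Your intermediate equiprobability claim ($\|\Pi_e|G\rangle\|^2=2^{-(w_j-1)}$) is therefore unnecessary and is stated without justification; drop it and the argument still goes through.

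For the strict inequality, you invoke \cite{gunhe04} as a black box (and say so), translating a size-$(2^n-1)$ independent set into a deterministic local model for all stabilizer correlations. The paper instead reproduces the underlying mechanism: it checks $\alpha(H)<7$ for the two $3$-vertex orbits by hand, and then shows that any connected $G$ with $n>2$ contains a connected $3$-vertex subgraph $G'$ whose stabilizer sits inside $S$, so a full transversal in $H$ would restrict (after a sign-flip trick on the extra $Z$ factors) to one in $H(G')$, a contradiction. Your outsourced version is a legitimate proof plan; the paper's version is what the self-contained argument you allude to actually looks like.
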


We firstly give an intuition of the statement, explaining how the Theorem can be seen as a consequence of the results in \cite{cabello10,gunhe04}. A formal and stand-alone proof will follow later in this section. 
Let $|G\rangle $ be the graph state with
corresponding graph $G$. It was shown in \cite{gunhe04} that the sum of the
elements of the stabilizer group of $|G\rangle $, $\sum_{j=1}^{2^{n}-1}s_{j}$
is a Bell operator such that $\sum_{1}^{2^{n}-1}s_{j}|G\rangle
=(2^{n}-1)|G\rangle $ and $\max |\langle S \rangle|<2^{n}-1$ when
restricting to classical states (where $S$ is the stabilizer group defined earlier). In
words, the graph state $|G\rangle $ violates the corresponding Bell
inequality up to its algebraic maximum. This fact together with
\cite[Equation 6]{cabello10} enforces that $\alpha (H)<\vartheta (H)$ and $%
\vartheta (H)=\alpha ^{\ast }(H)=2^{n}-1$. The construction in Definition \ref{def:H}
simply transforms the Bell operator, originally written as a sum of mean values,
into a sum of probabilities of events, in order to construct the graph
associated with the exclusivity structure.

The statement, therefore, combines known facts from quantum information in a novel way in order to prove a purely graph theoretical result.

The proof of Theorem \ref{thm:Hparams} requires the following definition:
\begin{definition}[Canonical orthogonal representation]
\label{def:orth}Let $H=(V,E)$ be a graph representing a Kotzig orbit. Let
$ S_{(i,j)} = \left( s^{(1)}_{(i,j)},s^{(2)}_{(i,j)},\ldots,s^{(n)}_{(i,j)}\right) $ be the event
at vertex $(i,j)\in V(H)$, where $i = 1 \ldots 2^n-1$, $j = 1 \ldots 2^{w_i - 1}$, and
$s^{(k)}_{(i,j)}\in \{I,x,\underline{x},y,%
\underline{y},z,\underline{z}\}$, for each $k=1,2,\ldots,n$. Let $%
|s^{(k)}_{(i,j)}\rangle $ be defined as follows:%
\begin{equation}
\begin{tabular}{l|lllllll}
$s^{(k)}_{(i,j)}$ & $x$ & $\underline{x}$ & $y$ & $\underline{y}$ & $z$ & $%
\underline{z}$ & $I$ \\ \hline
$|s^{(k)}_{(i,j)}\rangle $ & $|+\rangle $ & $|-\rangle $ & $|y_{+}\rangle $ & $%
|y_{-}\rangle $ & $|0\rangle $ & $|1\rangle $ & $|\psi \rangle $%
\end{tabular}%
.
\end{equation}%
Here, $|\psi \rangle $ is an arbitrary ray in $\mathbb{C}^{2}$ and $|y_{+}\rangle, |y_{-}\rangle$
are the eigenvectors of the Pauli matrix $Y$ with eigenvalue $+1$ and $-1$, respectively.
The \emph{canonical orthogonal representation} of $H$ is the set of vectors
$\{|s_{(i,j)}\rangle :=|s^{(1)}_{(i,j)}\rangle \otimes |s^{(2)}_{(i,j)}\rangle
\otimes \cdots \otimes |s^{(n)}_{(i,j)}\rangle :(i,j)\in V(H)\}$.
\end{definition}

For example, in $H(P_{3})$ (see Fig. \ref{H1}), the element of the canonical
orthogonal representation of the vertex labeled by $\underline{x}I\underline{%
x}$ is $|-\rangle \otimes |\psi \rangle \otimes |- \rangle $. Notice that
if $|\psi \rangle $ is chosen to be non-orthogonal to any of the vectors $%
|+\rangle ,|-\rangle ,\ldots ,|1\rangle $ then the representation is faithful.
\bigskip

\begin{proof}[Proof of Theorem \protect\ref{thm:Hparams}]
Let $H=H(G)$ be a graph representing a Kotzig orbit of a graph $G$. The
proof is structured in three parts: \emph{(1)} we prove that $\vartheta (H)\geq
2^{n}-1$; \emph{(2)} we prove that $\alpha ^{\ast }(H)\leq 2^{n}-1$; \emph{%
(3)}\ finally, we prove that $\alpha (H)<2^{n}-1$. The first two parts
together prove that $\vartheta (H)=\alpha ^{\ast }(H)=2^{n}-1$, since $\vartheta
(G)\leq \alpha ^{\ast }(G)$, for any graph $G$ (see, \emph{e.g.}, \cite%
{cabello10}). We begin with the first part:

\emph{(1) }It follows directly from Eq. \eqref{graphdef} that $%
\sum_{i=1}^{2^{n}-1}\langle G|s_{i}|G\rangle =2^{n}-1$. We know that the
eigenvectors with eigenvalue $+1$ of each operator $s_{i}$ are in one-to-one
correspondence with the vertices of a clique in $H$: $|s_{(i,1)}\rangle
,|s_{(i,2)}\rangle,\ldots,|s_{(i,2^{w_{i}-1})}\rangle $. These are elements of
the canonical orthogonal representation of $H$. From the definition of the
stabilizer group, for all $s_{i}\in S$ and for all eigenvectors $|\underline{%
s}^{(i,j)}\rangle $ ($j=1,2,\ldots,2^{w_{i}-1}$) with eigenvalue $-1$, we have $%
\langle \underline{s}^{(i,j)}|G\rangle =0$, because $|G\rangle $ is in the $%
+1$ eigenspace. Now, let $s_{i}=\sum_{j}\lambda _{ij}|s_{(i,j)}\rangle
\langle s_{(i,j)}|$ be an Hermitian eigendecomposition of $s_{i}$. Thus,
\begin{eqnarray}
2^{n}-1 &=&\sum_{i=1}^{2^{n}-1}\langle G|s_{i}|G\rangle \\
&=&\sum_{i=1}^{2^{n}-1}\sum_{j}\lambda _{ij}\langle G|s_{(i,j)}\rangle
\langle s_{(i,j)}|G\rangle \\
&=&\sum_{i=1}^{2^{n}-1}\sum_{j:\lambda _{ij}=1}\langle G|s_{(i,j)}\rangle
\langle s_{(i,j)}|G\rangle \\
&=&\sum_{i=1}^{2^{n}-1}\sum_{j:\lambda _{ij}=1}|\langle G|s_{(i,j)}\rangle
|^{2} \\
&\leq &\vartheta (H),
\end{eqnarray}%
where the inequality in the last line follows because a canonical
orthogonal representation of $H$ together with the state $|G\rangle $
represents a feasible solution for the semidefinite formulation of the Lov\'{a}sz
number in Eq. \eqref{lovasz}.

\emph{(2) }From the linear programming formulation of the fractional packing
number (see Eq. \eqref{alphastar}), it is easy to see that a partition of
the set of vertices into $k$ cliques gives an upper bound to $\alpha ^{\ast}(H)$.
To see this choose one vertex, say $i$, per clique and set its weight
$w_{i}=1$. We get a partition of $H$ into $2^{n}-1$ cliques if we consider the
events associated with each $s_{i}$.

\emph{(3)\ }We use an argument very similar to \cite[Lemma 1 and
Theorem 1]{gunhe04}. If the number of vertices of $G$ is two then the result
does not hold as $\al(H) = 2^2-1$ (by direct calculation).
Each connected graph with more than two
vertices has a subgraph with three vertices. For each of those we can see
(also by direct calculation; see Table \ref{BigClass} of section \ref{Kotzig})
that $\alpha (H)<7$.
Therefore, we just need to show that if $G^{\prime }$ is a subgraph of $G$
with $n^{\prime }$ vertices and $\alpha (H^{\prime })<2^{n^{\prime }}-1$, where $%
H^{\prime }$ is the representative of the Kotzig orbit of $G^{\prime }$, then $%
\alpha (H)<2^{n}-1$ for $n > 2$. Notice that $S^{\prime }$, the stabilizer group of $%
G^{\prime }$, is a subset of $S$. Therefore, in the graph $H$ we find
cliques associated with $S^{\prime }$, but containing slightly different
events. For each $s_{i}^{\prime }\in S^{\prime }$, the corresponding $%
s_{i}\in S$ has the same structure, with eventually some additional $Z$
operators. Let $\tilde{H}$ be the subgraph of $H$ induced by the vertices in
cliques associated with the elements of $S^{\prime }$. We need to show that if
in $H^{\prime }$ there is no vertex per clique to form a maximal independent
set then neither are there in $\tilde{H}$. Therefore, $\alpha (H)<2^{n}-1$.
Towards a contradiction, suppose there is an independent set $L$ of
$\tilde{H}$ such that $|L|=2^{n^{\prime }}-1$. We distinguish two cases:

\begin{itemize}
\item If the events at the vertices in $L$ do not have any $\underline{z}$
element then we can map them to an independent set in $H^{\prime }$ of size
$2^{n^{\prime }}-1,$ just by ignoring the additional $Z$ operators. This
contradicts the hypothesis that $\alpha (H^{\prime })<2^{n^{\prime }}-1$.

\item If the events at the vertices in $L$ do have $\underline{z}$ elements
then we can find another independent set $J$ with the same cardinality such
that the events at its vertices do not have any $\underline{z}$ element. We
can find $J$ as follows. It is easy to check that an operator $s_{i}$ has
the form $O^{(1)}\cdots Z^{(\ell) }\cdots O^{(n)}$ if and only if it has an odd number of
$X^{(k)}$ and $Y^{(k)}$, with $\{\ell ,k\}\in E(H)$. Therefore, complementing
$\underline{z^{(\ell)}}$ and all occurrences of $X^{(k)}$ and $Y^{(k)}$ in
the events at the vertices of the independent set $L$, we obtain the events
in $J$ with the desired properties, and so we are back to the previous case.
\end{itemize}
\end{proof}

\bigskip

The Bell inequalities described by the graph $H$ are exactly the same as in
\cite{gunhe04}, but in the form of a pseudo-telepathy game.

\begin{definition}
For any graph $G$ on $n$ vertices, let us define an $n$-player game for $G$
as follows. The \emph{input set }for each player is $\{X,Y,Z,I\}$ and the
\emph{output set} is $\{+1,-1\}$. The set of valid inputs is the set of
elements of the stabilizer group of $|G\rangle $. The players win on input
$s_{i}\in S$ if and only if the sign of the product of their outputs equals
the sign of $s_{i}$.
\end{definition}

\begin{corollary}
The graph game for $G$ is a pseudo-telepathy game.
\end{corollary}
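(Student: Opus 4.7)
The plan is to prove the two halves of pseudo-telepathy separately: first $\omega_q = 1$ by exhibiting an explicit quantum strategy based on the graph state, and then $\omega_c < 1$ by reducing any always-winning deterministic classical strategy to an independent set of size $2^n - 1$ in $H(G)$, which is ruled out by Theorem \ref{thm:Hparams}.

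For the quantum half, the natural strategy is for the $n$ players to share the graph state $|G\rangle$. On a joint input $s_i = \mathrm{sign}(s_i)\bigotimes_{k=1}^n O^{(k)}$ drawn from $S$, player $k$ measures the single-qubit Pauli $O^{(k)}$ on qubit $k$ and outputs the observed eigenvalue (if $O^{(k)} = I$ the player simply outputs $+1$). Because the $O^{(k)}$ act on different tensor factors, the $n$ single-qubit measurements jointly implement a measurement of $s_i$; and since $s_i|G\rangle = \mathrm{sign}(s_i)|G\rangle$ by the stabilizer relation \eqref{graphdef}, the product of the outputs equals $\mathrm{sign}(s_i)$ with probability one. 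Hence $\omega_q = 1$.

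For the classical half, by standard convexity it suffices to rule out a deterministic strategy $(f_1,\ldots,f_n)$ with $f_k : \{X,Y,Z,I\} \to \{+1,-1\}$ that wins on every input $s_i \in S$. Each such $s_i$ of weight $w_i$ is associated, by construction, with a clique of $2^{w_i-1}$ vertices in $H(G)$: these vertices are precisely the events (joint $\pm 1$ assignments to the non-trivial Pauli factors of $s_i$) whose product matches $\mathrm{sign}(s_i)$, i.e.\ the winning outputs on input $s_i$. An always-winning deterministic strategy therefore selects exactly one vertex from each of the $2^n-1$ such cliques, namely the event whose $k$-th coordinate is determined by $f_k(O^{(k)})$.

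The key observation is that this selection is an independent set in $H(G)$: two distinct selected vertices, coming from operators $s_i, s_j$, could be adjacent only if they disagreed on the same non-trivial single-qubit measurement, i.e.\ if for some $k$ we had $O_i^{(k)} = O_j^{(k)} \neq I$ yet opposite assigned signs; but both signs come from the same value $f_k(O_i^{(k)})$, which is impossible. Thus the strategy yields an independent set of size $2^n - 1$ in $H(G)$, contradicting $\alpha(H(G)) < 2^n - 1$ from Theorem \ref{thm:Hparams}. Consequently, for any input distribution with full support (e.g.\ uniform on $S$), $\omega_c < 1$. Combined with $\omega_q = 1$, this proves the game is a pseudo-telepathy game.

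The only substantive step requiring care is the correspondence between classical deterministic strategies and vertex selections in $H(G)$, specifically the observation that two outputs produced by the same function $f_k$ on the same Pauli input cannot disagree, which is exactly what rules out edges between selected vertices. Everything else is an immediate consequence of the definitions and of Theorem \ref{thm:Hparams}.
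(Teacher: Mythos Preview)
Your proof is correct and follows the same approach as the paper: the quantum strategy uses the shared graph state $|G\rangle$ with single-qubit Pauli measurements, and the classical impossibility is obtained by mapping a perfect deterministic strategy to an independent set of size $2^n-1$ in $H(G)$ and invoking Theorem~\ref{thm:Hparams}. The paper additionally proves the converse direction (an independent set of size $k$ yields a classical strategy winning on at least $k$ inputs), establishing a full correspondence between the number of inputs won and the independence number, but this extra direction is not needed for the corollary itself.
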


\begin{proof}
It is easy to see that if the players share the graph state $|G\rangle $ and
each player performs the measurement corresponding to her input, then they
always win. On the other hand, we show that a classical strategy for the game
can be used to construct an independent set of $H=H(G)$ and \emph{viz}.

We now consider the first direction. If there exists a strategy which
answers correctly to $k$ questions then there exists an independent set with
$k$ elements. A classical strategy is \emph{w.l.o.g.}\ a set of functions
for each player from the input set to the output set. Therefore, for all
the winning inputs $s_{i}$, there will be a single output $(a_{1},\dots,a_{n})$,
corresponding to a vertex of $H$. It is easy to verify that there
cannot be an edge between any pair of these vertices. Since the strategy
wins on $k$ input pairs, the independent set has $k$ elements.

For the other direction, we show that if there exists an independent set $L$
of $H$ having size $k$, then there exists a strategy for the game on $G$
that answers correctly to at least $k$ of the $2^{n}-1$ questions. By the
structure of $H$, the independent set $L$ cannot contain vertices $i,j$ such
that, for the same input $x$, $a_{\ell }^{(i)}\neq a_{\ell }^{(j)}$ for some
$\ell \in \{1,\dots n\}$. Hence, we have the following strategy: on input $x$,
each player outputs the unique $a$ determined by the vertices in the
independent set. The size $k$ of the independent set implies that the
players answer correctly to at least $k$ input pairs.
\end{proof}

\section{Examples}
\label{Examples}

\begin{proposition}
Let $K_{n}$ be the complete graph on $n$ vertices. Then,%
\begin{equation}
\alpha (H(K_{n}))=%
\begin{cases}
2^{\frac{n-3}{2}}+3\cdot 2^{n-2}-1, & \text{for $n$ odd;} \\
2^{\frac{n}{2}-1}-2^{n-2}+2^{n}-1, & \text{for $n$ even.}%
\end{cases}
\label{al}
\end{equation}
\end{proposition}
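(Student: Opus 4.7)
The plan is to invoke the Corollary immediately above, which identifies $\alpha(H(G))$ with the maximum number of questions that a classical strategy for the pseudo-telepathy game associated with $G$ can answer correctly; I apply it with $G=K_n$. A classical strategy is a choice of $(x_i,y_i,z_i)\in\{\pm 1\}^3$ for each player $i\in[n]$, namely the deterministic output on input $X$, $Y$, or $Z$.

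The first step is to enumerate the stabilizer of $|K_n\rangle$. From $g_i=X^{(i)}\bigotimes_{j\neq i}Z^{(j)}$ and the anti-commutation $XZ=-ZX$, a short induction on $|T|$ gives, for every non-empty $T\subseteq[n]$,
\begin{equation*}
\prod_{i\in T}g_i \;=\; \begin{cases}(-1)^{(|T|-1)/2}\bigotimes_{i\in T}X^{(i)}\bigotimes_{i\notin T}Z^{(i)}, & |T|\text{ odd},\\[2pt] \bigotimes_{i\in T}Y^{(i)}, & |T|\text{ even}.\end{cases}
\end{equation*}
Consequently the total wins split as $W=W_{\mathrm{odd}}+W_{\mathrm{even}}$, where $W_{\mathrm{odd}}$ counts wins on odd-$|T|$ stabilizers and depends only on $(x,z)$, while $W_{\mathrm{even}}$ counts wins on even-$|T|$ stabilizers and depends only on $y$; since the two blocks of variables are disjoint the two optimizations decouple.

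The even block is immediate: the condition $\prod_{i\in T}y_i=1$ holds for every non-empty even $T$ when $y\equiv 1$, yielding $2^{n-1}-1$ wins, whereas any $y$ with $m=|\{i:y_i=-1\}|\in\{1,\ldots,n-1\}$ yields only $2^{n-2}-1$ wins by a $\pm 1$-character count, so $\max W_{\mathrm{even}}=2^{n-1}-1$. For the odd block, substitute $u_i=x_iz_i$, $c=\prod_i z_i$, and $r=|\{i:u_i=-1\}|$; the winning condition reads $\prod_{i\in T}u_i=c\cdot(-1)^{(|T|-1)/2}$. Using the generating function $\prod_i(1+\zeta u_i)$ evaluated at the fourth roots of unity $\zeta\in\{1,i,-1,-i\}$ to extract the residue $|T|\bmod 4$, simplification gives
\begin{equation*}
W_{\mathrm{odd}}(x,z) \;=\; 2^{n-2}+c\cdot 2^{n/2-1}\sin\!\left(\tfrac{\pi(n-2r)}{4}\right).
\end{equation*}
For $n$ even, $\sin=1$ is attained at $r=(n-2)/2$ with $c=+1$, giving $\max W_{\mathrm{odd}}=2^{n-2}+2^{n/2-1}$; for $n$ odd, the best value of $|\sin|$ is $\sqrt{2}/2$, attained at $r=(n\pm 1)/2$ with $c$ chosen to match the sign, giving $\max W_{\mathrm{odd}}=2^{n-2}+2^{(n-3)/2}$.

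Adding the two maxima (simultaneously achievable because they involve disjoint variables) yields $3\cdot 2^{n-2}+2^{n/2-1}-1$ for $n$ even and $3\cdot 2^{n-2}+2^{(n-3)/2}-1$ for $n$ odd, which is exactly \eqref{al}. The main obstacle I anticipate is the Fourier computation producing the closed form for $W_{\mathrm{odd}}$: because the sign $(-1)^{(|T|-1)/2}$ on the odd stabilizers forces decomposing the subset sum by $|T|\bmod 4$ rather than by $|T|\bmod 2$, character sums at fourth roots of unity (not just at $\pm 1$) are needed, and the endpoint cases $r\in\{0,n\}$ must be checked separately to confirm that they do not exceed the interior optimum.
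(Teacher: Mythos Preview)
Your proof is correct and rests on the same decomposition as the paper's: both split the stabilizer of $|K_n\rangle$ into the odd-$|T|$ operators (the Mermin part $\mu_n$) and the even-$|T|$ operators (the trivial part $\tau_n$), observe that the two optimizations decouple because they involve disjoint variables, and note the even block is saturated by $y\equiv 1$. The difference is in how the odd block is handled. The paper rewrites $\alpha(H(K_n))$ as $(\beta_{\mathrm{QM}}+\beta_{\mathrm{LHV}})/2$ and then imports the classical bound $\mu_{\mathrm{LHV}}(n)=2^{\lceil n/2\rceil}$ directly from Mermin~\cite{mermin90}, so the combinatorial core is black-boxed. You instead invoke the game/independent-set correspondence from the Corollary and compute the Mermin bound from scratch, extracting the residue of $|T|\bmod 4$ via the generating function $\prod_i(1+\zeta u_i)$ at fourth roots of unity to obtain the closed form $W_{\mathrm{odd}}=2^{n-2}+c\,2^{n/2-1}\sin(\pi(n-2r)/4)$. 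Your route is longer but fully self-contained and makes the dependence on the strategy parameters $(c,r)$ explicit; the paper's route is shorter precisely because it defers that computation to the literature.
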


\begin{proof}
As said before, $H(G)$ can be associated to the Bell inequality in which the
Bell operator is the sum of all stabilizer operators of $|G\rangle $ \cite%
{gunhe04}. The first observation is that the graph state associated to $%
G=K_{n}$ is the $n$-qubit Greenberger-Horne-Zeilinger (GHZ) state \cite%
{hein03}. The second observation is that, in that case, the Bell inequality
corresponding to $H(G)$ is the sum of a well-known Bell inequality maximally
violated by the $n$-qubit GHZ state \cite{mermin90} plus a trivial Bell
inequality not violated by quantum mechanics \cite{cabello08}. For example,
for $n=3$, the Bell inequality corresponding to $H(K_{3})$ is $\beta
_{3}=\mu _{3}+\tau _{3}\leq 2+3$, where
\begin{subequations}
\begin{align}
\mu _{3}& =\langle XZZ\rangle +\langle ZXZ\rangle +\langle ZZX\rangle
-\langle XXX\rangle, \\
\tau _{3}& =\langle YYI\rangle +\langle YIY\rangle +\langle IYY\rangle;
\end{align}
\end{subequations}
recall that $\langle XZZ\rangle $ denotes the mean value of the product of
the outcomes of the measurement of $X$ on qubit $1$, $Z$ on qubit $2$, and $%
Z $ on qubit $3$. The inequality $\mu _{3}\leq 2$ is the well-known Bell
inequality introduced in \cite{mermin90}, while $\tau _{3}\leq 3$ is a
trivial inequality not violated by quantum mechanics \cite{gunhe04,
cabello08}. The sum $\beta _{3}$ has $7$ terms, with four terms generating cliques of size $2^3 - 1$ and the other three terms generating cliques of size $2^2 - 1$. Then, we can see that $\alpha (H(K_{3}))$ is equal
to the maximum number of quantum predictions that a deterministic local
theory can simultaneously satisfy. By quantum predictions we mean: $XZZ=1$, $%
ZXZ=1$, $ZZX=1$, $XXX=-1$, $IYY=1$, $YIY=1$, and $IYY=1$. In this case, the
maximum number of quantum predictions that a deterministic local theory can
simultaneously satisfy is $6$: $3$ out of $XZZ=1$, $ZXZ=1$, $ZZX=1$, $XXX=-1$%
, plus the other $3$ ($IYY=1$, $YIY=1$, and $IYY=1$). Equivalently, $\alpha
(H(K_{3}))$ is the maximum quantum violation, denoted by $\beta _{\mathrm{QM}%
}$, minus the minimum number of quantum predictions which \emph{cannot} be
satisfied by a deterministic local theory. Since the minimum number of
quantum predictions which cannot be satisfied by a deterministic local
hidden variable theory is $\beta _{\mathrm{QM}}$ minus the maximum value of
the Bell operator for a deterministic local theory, denoted by $\beta _{%
\mathrm{LHV}}$, and all of them divided by two, then%
\begin{equation}
\alpha (H(K_{n}))=\frac{\beta _{\mathrm{QM}}(n)+\beta _{\mathrm{LHV}}(n)}{2}.
\label{aln}
\end{equation}%
This expression is very useful since, for the Bell inequalities for the $n$%
-qubit GHZ states \cite{gunhe04},
\begin{align}
\beta _{\mathrm{QM}}(n)& =\mu _{\mathrm{QM}}(n)+\tau _{\mathrm{QM}}(n),
\label{des} \\
\beta _{\mathrm{LHV}}(n)& =\mu _{\mathrm{LHV}}(n)+\tau _{\mathrm{LHV}}(n).
\notag
\end{align}%
The interesting point is that the values of $\mu _{\mathrm{QM}}(n)$ and $\mu
_{\mathrm{LHV}}(n)$ are well-known \cite{mermin90}, and
\begin{equation}
\tau _{\mathrm{QM}}(n)=\tau _{\mathrm{LHV}}(n)=\beta _{\mathrm{QM}}(n)-\mu _{%
\mathrm{QM}}(n)
\end{equation}
(Recall that $\beta _{\mathrm{QM}}(n)=2^{n}-1$.) For $n$ odd, $\mu _{\mathrm{%
QM}}(n)=2^{n-1}$, and $\mu _{\mathrm{LHV}}(n)=2^{(n-1)/2}$; for $n$ even, $%
\mu _{\mathrm{QM}}(n)=2^{n-1}$ and $\mu _{\mathrm{LHV}}(n)=2^{n/2}$.
Inserting these numbers in Eq. (\ref{des}) and then in Eq. (\ref{aln}), we
obtain the statement.
\end{proof}

\section{Zero-error capacity} \label{sec:capacity}

In this section, we show that for every graph $G$ on $n$ vertices the graph $%
H(G)$ has zero-error entanglement-assisted capacity $2^{n}-1$. Theorem \ref%
{thm:Hparams} states that $\alpha (H(G))<2^{n}-1$. The result gives a
separation between $c_{0}^{\ast }(H(G))$ and $c_{0}(H(G))$.
It is known that for all graphs, the Lov\'asz number upper bounds the entanglement-assisted Shannon capacity \cite{DSW10}.
Therefore, $c_{0}^{\ast }(H(G))$ saturates its upper bound.

There are few (and very recently discovered) classes of graphs for which this separation is known.
For example, one is based on the Kochen-Specker theorem \cite{CLMW10} and other ones are based on variations of orthogonality graphs
\cite{BBG12, MSS12}.
Here, we present a new family of graphs and a construction method, which can also be interpreted as a graph theoretic technique of independent interest. The most important point is that every graph gives rise to a member of the family through our construction. This property opens directions for future studies, for example, identifying subclasses or hierarchies where the separation is large or is easy to quantify.

\begin{theorem} \label{thm:Capacity}
Let $H$ be a graph from a Kotzig orbit. Then $c_{0}^{\ast }(H)=2^{n}-1$.
\end{theorem}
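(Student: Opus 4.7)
The plan is to bracket $c_0^*(H)$ tightly by $2^n-1$. The upper bound $c_0^*(H)\le 2^n-1$ is immediate from the paragraph preceding the theorem: $c_0^*(G)\le\vartheta(G)$ for every graph $G$ \cite{DSW10}, and Theorem~\ref{thm:Hparams} gives $\vartheta(H)=2^n-1$. All the work lies in exhibiting a one-shot entanglement-assisted protocol attaining $c_0^*(H)\ge 2^n-1$.

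I would use the setting of Section~\ref{sec:capacity} with the maximally entangled state $\ket{\Psi}=\frac{1}{\sqrt{2^n}}\sum_{i=0}^{2^n-1}\ket{i}_A\ket{i}_B$, Alice interpreting her half as $n$ qubits. To send message $m\in\{1,\dots,2^n-1\}$, identified with the stabilizer $s_m=\bigotimes_{k=1}^n O_m^{(k)}$ of weight $w_m$ supported on $T_m$, Alice measures the Pauli $O_m^{(k)}$ on her qubit $k$ for every $k\in T_m$. Her raw outcome is a sign pattern on $T_m$; she packages it into a POVM $\{E_x^m\}_{x\in X}$ supported on $S_m$ by pairing each ``bad'' pattern (sign product opposite to that of $s_m$) with a ``good'' one via a flip at a designated qubit $k_0^{(m)}\in T_m$, so that each
\[
E_x^m \;=\; I_{k_0^{(m)}}\otimes\bigotimes_{k\in T_m\setminus\{k_0^{(m)}\}}|O_m^{(k)};x_k\rangle\langle O_m^{(k)};x_k|\otimes I_{[n]\setminus T_m}
\]
is a rank-$2^{n-w_m+1}$ projector and $\sum_x E_x^m=I$. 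Alice transmits the observed event $x\in S_m\subset X$ through $\mathcal{C}$; Bob's share has collapsed to the matching product of single-qubit Pauli eigenstates and he recovers $m$ by a projective measurement.

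Correctness reduces to checking condition~\eqref{eq:condition2}: for $m\ne m'$ and $x\in S_m$, $x'\in S_{m'}$ adjacent in $H$, $\Tr(E_x^m E_{x'}^{m'})=0$. Exclusivity supplies a qubit $k^*\in T_m\cap T_{m'}$ with $O_m^{(k^*)}=O_{m'}^{(k^*)}$ and $x_{k^*}\ne x'_{k^*}$; the corresponding rank-$1$ single-qubit projectors inside $E_x^m$ and $E_{x'}^{m'}$ are then orthogonal, which propagates to global trace-orthogonality whenever $k^*\notin\{k_0^{(m)},k_0^{(m')}\}$.

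The main obstacle is choosing the family $\{k_0^{(m)}\}_m$ so that the escape $k^*\notin\{k_0^{(m)},k_0^{(m')}\}$ holds for every confusable pair, even in the corner case of a pair whose unique conflict coordinate coincides with a designated flip-qubit. I would handle this via a combinatorial argument exploiting the rigidity of the stabilizer group: two stabilizer elements whose supports meet in a single coordinate are strongly constrained, so such corner cases are rare and can be avoided by a global assignment. Should a direct global choice prove unwieldy, an alternative is to replace the bit-flip pairing by a steering construction on Bob's side based on the canonical orthogonal representation of Definition~\ref{def:orth}, which via the Schr\"odinger--HJW theorem realises the orthogonality in~\eqref{eq:condition2} by construction from the representation's guaranteed orthogonality on adjacent vertices.
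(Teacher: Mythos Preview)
Your upper bound is correct and matches the paper. The gap is in the lower bound, and it is genuine: your construction cannot be repaired by any choice of the flip-qubits $\{k_0^{(m)}\}$.

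Take $G=P_3$. The stabilizers $s_1=XZI$, $s_3=IZX$, $s_5=XIX$ have pairwise support overlaps $\{2\}$, $\{3\}$, $\{1\}$ respectively, and at each overlap the Pauli letters agree. Hence every edge of $H$ between $S_1$ and $S_5$ has its \emph{unique} conflict coordinate equal to $1$, and every edge between $S_1$ and $S_3$ has unique conflict coordinate $2$. Your projector $E_x^{1}$ places the identity at $k_0^{(1)}\in T_1=\{1,2\}$; whichever of the two you pick, one of these families of edges loses its only orthogonality witness and $\Tr(E_x^{1}E_{x'}^{m'})\neq 0$ for some confusable pair. Concretely, with $k_0^{(1)}=1$ and $k_0^{(5)}=3$ one gets $E_{xzI}^{1}=I\otimes|0\rangle\langle 0|\otimes I$ and $E_{\underline{x}I\underline{x}}^{5}=|-\rangle\langle-|\otimes I\otimes I$, so $\Tr(E_{xzI}^{1}E_{\underline{x}I\underline{x}}^{5})=2$. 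Thus no global assignment exists already for $n=3$; weight-two stabilizers with one-coordinate overlaps are abundant, not rare, so the ``rigidity'' heuristic fails.

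The Schr\"odinger--HJW alternative does not rescue this. The canonical representation assigns to the clique $S_i$ the rank-one projectors onto the $+1$ eigenvectors of $s_i$; these sum to the projector onto the $+1$ eigenspace of $s_i$, not to the identity, so they do not form a POVM on $\mathbb{C}^{2^n}$. HJW lets you realise different ensembles of a fixed reduced state by steering, but it does not manufacture the missing mass needed to complete $\sum_j |s_{(i,j)}\rangle\langle s_{(i,j)}|$ to $I$ while preserving the cross-clique orthogonalities dictated by the edges of $H$.

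The idea you are missing, and which the paper supplies, is to \emph{enlarge the local Hilbert space} with an ancilla. The paper shares a maximally entangled state of local dimension strictly larger than $2^n$ and tags each tensor factor of the canonical-representation vector $|s_{(i,j)}^{(k)}\rangle$ with an ``occurrence number'' register. The extra room lets the projectors within each clique $S_i$ tile out to the full identity, while the canonical-representation factor still guarantees orthogonality across every edge of $H$. Without such an enlargement the two requirements---completeness of each measurement $\{E_x^{m}\}_x$ and trace-orthogonality across all edges---are in direct conflict, as the $P_3$ computation above shows.
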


\begin{proof}
From Theorem \ref{thm:Hparams} and \cite[Corollary 14]{DSW10} we obtain the upper bound
$$  c_{0}^{\ast }(H(G)) \leq \vartheta(H(G)) = 2^{n}-1.
$$
We need to show a matching lower bound on $c_{0}^{\ast }(H(G))$.
We do this by exhibiting a strategy for entangled parties to send one out of $2^{n}-1$
messages in the zero-error setting through a channel with confusability
graph $H$. The strategy is as follows. Alice and Bob share a maximally
entangled state of local dimension $2^{n}(n-1)$. Observe that $H$ can be
partitioned into $2^{n}-1$ cliques, one for each element of the stabilizer
group. The clique corresponding to $s_{i}\in S$ consists of the vertices
associated with the mutually exclusive events in the set $S_{i}$; we denote
by $S_{i}$ the set of events related to $s_{i}$ as in Section \ref{con}. For
each $i\in \{1,2,\ldots,2^{n}-1\}$, Alice performs a projective measurement on
her part of the shared state. The outcomes of the measurement are the
elements of $S_{i}$. Since the parties share a maximally entangled state,
Alice's strategy has to satisfy two properties to be correct:

\begin{enumerate}
\item For each $i\in \{1,2,\ldots,2^{n}-1\}$, the projectors associated to
elements of $S_{i}$ form a projective measurement (because Alice needs to
perform a projective measurement for each message $i$ to be sent).

\item For each edge $\{u,v\}\in E(H(G))$, projectors associated with $u$ and
$v$ must be orthogonal (to satisfy the zero-error constraint).
\end{enumerate}

The next step is to exhibit projectors in Alice's strategy and show that
both properties are satisfied. In what follows we use the notation in
Definition \ref{def:orth}.

We begin by examining the case where $s_{i}$ does not contain any identity
operator. In this case, each projective measurement will consist of
projectors of rank $1$ acting on $\mathbb{C}^{2^n (n-1)}$. Order the elements
of $S_{i}$ arbitrarily. Let $s_{i}$ be of the form $O^{(1)}\cdots O^{(n)}$, where $%
O^{(k)}\in \{X,Y,Z\}$. Define for each $s^{(k)}_{(i,j)}$ the \emph{occurrence
number} $\nu (i,j,k)$ based on a chosen ordering: if the same eigenvector of
$O^{(k)}$ occurs in $s^{(k)}_{(i,j)}$ for the $\ell $-th time in the chosen
ordering then $\nu (i,j,k)=\ell $. Construct projectors
starting from the canonical orthogonal representation and an ancillary space
of dimension $n-1$. For $s^{(k)}_{(i,j)}$, let
\begin{equation}
P_{(i,j)}=\bigotimes_{k=1}^{n}|s^{(k)}_{(i,j)}\rangle \langle
s^{(k)}_{(i,j)}|\otimes |\nu (i,j,k)\rangle \langle \nu (i,j,k)|.
\end{equation}
We show that Property 1 is satisfied.
These projectors are mutually orthogonal for all vertices $(i,j)$. We need
to prove that their sum is the identity. From the structure of the events in
$S_{i}$ we observe that, for each $O_{k}$, the eigenvectors with eigenvalue $%
+1$ (and $-1$) occur in half of the elements of $S_{i}$. Therefore, in the
construction of the projectors, a pair of $\pm 1$ eigenvectors for each $%
O_{k}$ is summed for each ancillary subspace. The sum of each subspace is
the identity. Hence, the total sum is the identity for the whole space.
We show now that also Property 2 is satisfied.
If two projectors are in the same clique, orthogonality follows from the discussion above.
Consider now two projectors of adjacent vertices from two different cliques
that project to the same ancillary subspace. Since we started from an orthogonal
representation, those projectors are orthogonal.

Now, consider the more general case where $s_{i}$ can contain identity
operators. Let $s_{i}$ be of the form $O^{(1)}\cdots O^{(n)}$, where $O^{(k)}\in
\{I,X,Y,Z\}$. We assume that $s_{i}$ has weight $w$. First consider the case where
the first $w$ operators are different from identity, $O^{(1)},O^{(2)},\ldots ,O^{(w)}\neq I$.
To construct the projective measurement for $S_{i}$, we initially construct the
projectors for the first $w$ operators as in the previous case. We obtain
rank-$1$ projectors acting on $\mathbb{C}^{2^w(w-1)}$. Choose a basis for
$\mathbb{C}^{2^n(n-1)-2^w(w-1)}$ and let the projectors be
\begin{equation}
Q_{(i,j)}=\sum_{\ell =1}^{2^n(n-1)-2^w (w-1)}P_{(i,j)}\otimes |\ell \rangle \langle \ell |.
\end{equation}
This ensures that the dimensions match and that Properties $1$ and $2$ hold.
To finish the proof, we need to prove the general case where identity operators
are in arbitrary positions and not all at the end.
In this case, split the construction into subspaces so that each subspace
has all the identities at the end.
Obtain the projectors for the subspaces as described above and then obtain the
final projectors by making tensor products of the projectors for the subspaces.
\end{proof}

We immediately have the following corollary from Theorems \ref{thm:Hparams}
and \ref{thm:Capacity} and the Lov\'asz number upper bound on $\Theta^*$.
\begin{corollary}
Let $H$ be a graph from a Kotzig orbit. Then, $c_0(H)<c_0^*(H)=\Theta^*(H)$.
\end{corollary}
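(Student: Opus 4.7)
The plan is to obtain the corollary by chaining together three facts: the parameter identities from Theorem \ref{thm:Hparams}, the exact value of $c_0^*(H)$ from Theorem \ref{thm:Capacity}, and the known Lovász-number upper bound on the asymptotic entanglement-assisted capacity from \cite{DSW10}. No new construction is needed; the work has already been done in the two theorems, and the corollary merely packages their consequences.

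First I would recall that the one-shot zero-error capacity $c_0(H)$ is by definition the independence number $\alpha(H)$ of the confusability graph. Theorem \ref{thm:Hparams} gives $\alpha(H) < 2^n-1$ for a Kotzig-orbit representative on $n \geq 3$ qubits, so immediately $c_0(H) < 2^n-1$. Next I would invoke Theorem \ref{thm:Capacity} to get the matching upper value $c_0^*(H) = 2^n-1$, which already establishes the strict inequality $c_0(H) < c_0^*(H)$.

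To complete the equality $c_0^*(H) = \Theta^*(H)$, I would sandwich $\Theta^*(H)$ between two values that happen to coincide. On one side, taking $k$-fold tensor products of any single-shot protocol gives $c_0^*(H) \leq \Theta^*(H)$, so $\Theta^*(H) \geq 2^n-1$. On the other side, \cite[Corollary 14]{DSW10} provides $\Theta^*(H) \leq \vartheta(H)$, and Theorem \ref{thm:Hparams} gives $\vartheta(H) = 2^n-1$. Combining these two inequalities forces $\Theta^*(H) = 2^n-1 = c_0^*(H)$.

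There is essentially no obstacle to this proof once Theorems \ref{thm:Hparams} and \ref{thm:Capacity} are in hand: the only subtlety worth flagging explicitly is the hypothesis $n \geq 3$ (so that the strictness $\alpha(H) < 2^n-1$ actually holds, since Theorem \ref{thm:Hparams} excludes $n = 2$), and the fact that one must cite \cite{DSW10} for the Lovász bound on the \emph{asymptotic} entanglement-assisted capacity rather than merely the one-shot version. Beyond these bookkeeping items, the corollary is a direct consequence of the displayed chain
\begin{equation}
c_0(H) = \alpha(H) < 2^n - 1 = c_0^*(H) \leq \Theta^*(H) \leq \vartheta(H) = 2^n - 1,
\end{equation}
which collapses the middle three quantities into a single value and leaves only the strict first inequality.
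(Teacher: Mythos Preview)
Your proposal is correct and follows exactly the approach the paper indicates: the paper's proof consists of a single sentence citing Theorems~\ref{thm:Hparams} and~\ref{thm:Capacity} together with the Lov\'asz bound on $\Theta^*$ from \cite{DSW10}, and you have simply written out the chain of inequalities that this sentence encodes. The only addition you make is the explicit caveat about $n\geq 3$, which is appropriate bookkeeping.
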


\section{Classification}

Let $\lambda_H$ be the number of connected
components of $H$ - we follow $\lambda_H$ by the number of vertices in each
connected component, \emph{e.g.}, $\lambda_H = 2[6,16]$ means that there are two
connected components, one with 6 vertices and one with 16 vertices.
The degree sequence of $H$, $D_H$, is written as $a,b/c,d/e,f/\ldots$, meaning
that there are $b$ vertices of degree $a$, $d$ vertices of degree $c$, $f$
vertices of degree $e$, etc.

\vspace{2mm}

For each Kotzig orbit, we wish to compute the independence number, $\al(H)$. This is
also listed in Table \ref{BigClass}. Each stabilizing operator of $S$ is
multiplied by
a global coefficient $+1$ or $-1$. For instance, for the operators of $H(P_3)$
(see Fig \ref{H1} and associated discussion), there are six `$+1$' coefficients and one `$-1$'
coefficient.
If one selects one vertex from each of the 6 cliques in $H$ generated by the 6
operators
of this example which have a `$+1$' coefficient, then one can be
sure that they are mutually unconnected in $H$.
(More specifically, one can select for each operator the event where all the measurements
gave outcome $+1$. This event is always present if the operator has coefficient $+1$.)
 So $\al(H) \ge 6$, \emph{i.e.} the
number of `$+1$' coefficients
in $S$ for a given $G$ yields a lower bound on $\al(H)$, for each $G$.
This idea leads to the following lemma.
Let $\beta(G)$ be the number of operators in $S$ with a `$-1$' coefficient where, in general,
$\beta(G)$ is not an invariant of the Kotzig orbit of $G$. Let
$\beta_{\tt{min}}(G^L) = \min \{\beta(G), \hspace{2mm} G \in G^L\}$ and
$\beta_{\tt{max}}(G^L) = \max \{\beta(G), \hspace{2mm} G \in G^L\}$, where $G^L$ is the set of
graphs in the Kotzig orbit of $G$.

\begin{lemma} Given a graph $G$, we have
\begin{equation}
\al(H) \ge 2^n - 1 - \beta_{\tt{min}}(G^L) \ge 2^n - 1 - \beta(G).
\label{betalem}
\end{equation}
\end{lemma}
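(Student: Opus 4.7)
The plan is to exploit the fact that, by Theorem \ref{LCInvariance}, $H(G)=H(G')$ for every $G' \in G^L$, so it suffices to construct a large independent set of $H(G)$ for the specific representative in the Kotzig orbit that minimizes $\beta$. Once the stronger bound $\al(H) \ge 2^n - 1 - \beta(G)$ is established for an \emph{arbitrary} $G$, the inequality $\al(H) \ge 2^n - 1 - \beta_{\tt{min}}(G^L)$ follows by applying it to a minimizer, and the final weaker inequality is then immediate from $\beta_{\tt{min}}(G^L)\le \beta(G)$.

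To prove $\al(H) \ge 2^n - 1 - \beta(G)$, I would produce an explicit independent set of that size. Consider the collection of operators $s_j \in S$ whose global sign is $+1$; there are exactly $2^n - 1 - \beta(G)$ of them. For each such $s_j$ of weight $w_j$, select the distinguished event $S_{(j,\star)} \in S_j$ in which every non-identity single-qubit measurement appearing in $s_j$ yields outcome $+1$. This event is genuinely present in $S_j$: the constraint defining $S_j$ is that the product of the single-qubit outcomes equals the sign of $s_j$, and for a $+1$ operator the all-$+1$ assignment satisfies this. Hence $S_{(j,\star)}$ corresponds to an honest vertex of $H(G)$.

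The key step is to check that the set $I=\{S_{(j,\star)} : \mathrm{sign}(s_j) = +1\}$ is independent in $H(G)$. By Definition \ref{def:H}, two events are adjacent precisely when they assign opposite outcomes to the same single-qubit Pauli observable on some qubit $k$. But every event in $I$ assigns outcome $+1$ to every non-identity Pauli it mentions, so no two events in $I$ can disagree on any single-qubit measurement: they are pairwise non-exclusive and hence pairwise non-adjacent. This yields $\al(H)\ge |I|=2^n - 1 - \beta(G)$.

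I do not anticipate any serious obstacle here; the only subtlety is the bookkeeping that ``sign $+1$'' really means the all-$+1$ local outcome assignment satisfies $s_j|G\rangle=|G\rangle$, which is immediate from Eq. \eqref{graphdef} and the definition of the events $S_j$. The construction does \emph{not} use anything about the structure of $G$ beyond its stabilizer group, so it applies uniformly to every representative of $G^L$, and Theorem \ref{LCInvariance} then allows us to pass to the minimizer to obtain the stated middle inequality.
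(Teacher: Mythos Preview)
Your proposal is correct and follows exactly the argument the paper sketches in the paragraph immediately preceding the lemma: pick, for each stabilizer element with global coefficient $+1$, the event in which every non-identity single-qubit outcome is $+1$, observe that these events are pairwise non-exclusive, and then invoke Theorem~\ref{LCInvariance} to pass to the Kotzig-orbit representative minimizing $\beta$. There is nothing to add.
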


Table \ref{BigClass} lists both the computed independence number, $\al(H)$, for
each $H$, and the
range of lower bounds, $\beta_{\tt{min}}(G^L) - \beta_{\tt{max}}(G^L)$, on $\al(H)$. One
observes that the lower bound is often tight but not always. For example, for
the graph $G = 15,25,34,45$, computations show that $\beta_{\tt{min}}(G^L) = 6$ and
$\beta_{\tt{max}}(G^L) = 12$. So
$\al(H) \ge 2^5 - 1 - 6 = 25$. In this case the bound is tight as $\al(H)$ is
computed to be $25$.


{
\begin{sidewaystable}
\fontsize{7pt}{0}
\centering
\centering
\caption{$H(G)$ for $n = 2,\ldots,6$}
\label{BigClass}
\begin{equation*}
\begin{array}{|l|l|l|l|l|l|l|l|}
\hline
n & G & |G^L| & |V^H| & \lambda_H & \al(H) & \beta_{\tt{min}}(G^L) - \beta_{\tt{max}}(G^L) & D_H \\ \hline
2 & 12 & 1 & 6 & 3[2,2,2] & 3 & 0 & 1,6/ \\ \hline
3 & 12,13 & 2 & 22 & 2[6,16] & 6 & 1-1 & 3,6/9,16 \\ \hline
4 & 14,24,34 & 2 & 84 & 2[20,64] & 13 & 4-4 & 11,12/13,2/17,6/43,64 \\
& 14,23,24,34 & 4 & 76 & 1 & 13 & 2-4 & 15,4/25,4/27,4/29,32/35,32 \\ \hline
5 & 15,25,35,45 & 2 & 316 & 2[60,256] & 25 & 10-10 & 33,20/45,10/49,30/195,256/
\\
& 15,25,34,45 & 6 & 280 & 1 & 25 & 6-12 &
49,6/51,2/89,12/91,12/107,24/129,72/133,24/159,128/ \\
& 15,24,25,34,35 & 10 & 268 & 1 & 25 & 6-12 &
55,4/99,16/101,4/103,4/107,16/117,64/129,32/137,32/139,32/147,64/ \\
& 12,13,24,35,45 & 3 & 256 & 1 & 25 & 6-10 & 99,40/117,120/135,96 \\ \hline
6 & 16,26,36,46,56 & 2 & 1206 & 2[182,1024] & 51 & 20-20 &
101,30/143,30/147,90/151,2/167,30/841,1024/ \\
& 16,26,36,45,56 & 6 & 1030 & 1 & 51 & 16-28 &
145,12/189,2/207,2/211,6/299,24/301,24/343,4/351,28/ \\
&  &  &  &  & &  & 383,32/455,32/459,96/587,256/665,512/ \\
& 16,26,35,45,46 & 16 & 976 & 1 & 49 & 14-30 &
169,6/207,2/349,12/351,12/357,24/383,8/411,48/449,8/ \\
&  &  &  &  & &  & 453,24/471,32/473,16/477,48/481,48/483,48/533,128/557,128/ \\
&  &  &  &  & &  & 565,64/567,64/611,256/ \\
& 15,26,36,45,56 & 4 & 1044 & 1 & 49 & 14-30 &
173,12/321,36/323,36/479,192/541,64/545,192/679,512/ \\
& 15,26,35,36,45,46 & 5 & 958 & 1 & 51 & 20-28 &
213,6/355,32/395,12/397,12/441,96/515,384/547,192/ \\
&  &  &  &  & &  & 549,192/559,4/567,28/ \\
& 16,24,35,46,56 & 10 & 958 & 1 & 51 & 12-28 &
181,4/213,2/323,4/325,4/347,8/349,8/379,32/401,64/455,8/ \\
&  &  &  &  &  &  & 459,24/477,32/487,64/489,64/515,256/593,256/599,32/603,96/
\\
& 15,26,34,35,46,56 & 25 & 940 & 1 & 49 & 14-30 &
187,4/337,4/339,4/357,16/379,16/383,32/431,64/435,32/ \\
&  &  &  &  &  &  & 469,32/471,32/493,32/495,32/497,128/521,128/529,64/531,64/
\\
&  &  &  &  &  &  & 575,128/601,64/603,64/ \\
& 16,24,26,35,36,45 & 21 & 922 & 1 & 47 & 16-28 &
193,2/355,16/357,16/405,16/413,64/425,64/441,8/475,32/ \\
&  &  &  &  &  &  & 477,32/479,64/503,256/511,32/513,32/557,256/563,8/567,24/ \\
& 14,25,36,45,46,56 & 5 & 990 & 1 & 47 & 20-28 &
197,6/355,12/357,12/433,192/473,96/487,8/491,24/579,192/ \\
&  &  &  &  &  &  & 581,192/625,256/ \\
& 12,13,25,36,45,46 & 16 & 904 & 1 & 45 & 18-30 &
355,32/395,24/407,96/431,48/485,384/539,192/565,64/567,64/ \\
& 12,13,14,23,25,36,45,46,56 & 2 & 936 & 1 & 45 & 22-26 & 427,360/571,576/ \\
\hline
\end{array}%
\end{equation*}
\end{sidewaystable}
}



\vspace{3mm}

The \emph{symmetrical form} of $K_n$ makes it relatively easy to prove that,
for $n > 2$, $H(ST_n) = H(K_n)$ is comprised of two disjoint subgraphs, and the
results of section \ref{Examples} allow us to identify these two disjoint subgraphs, \emph{i.e.}
$H(K_n) = H(\mu_n) + H(\tau_n)$, where $H(\mu_n)$ has $2^{2n-2}$ vertices and
$H(\tau_n)$ has $\sum_{i=1}^{\lfloor \frac{n}{2} \rfloor} 2^{2i-1} \binom{n}{2i}$ vertices -
(see the equations in (\ref{aln}) for the case of $n = 3$).
It is likewise easy to show that $\beta(K_n) = \sum_{i=1}^{\lfloor \frac{n+1}{4} \rfloor}
\binom{n}{4k-1}$, and therefore we know that $\al(H(K_n)) \ge 2^n - 1 - \sum_{i=1}^{\lfloor \frac{n+1}{4} \rfloor}
\binom{n}{4k-1}$.

\vspace{3mm}

The evaluation of $\beta(G)$ can be translated to the following problem.

\begin{lemma}
Define the Boolean function, $f_G(z_1,z_2,\ldots,z_n) : \F_2^n \rightarrow \F_2$ such that
\begin{equation}
f_G(z_1,z_2,\ldots,z_n) = \sum_{\{i,j\},\{j,k\} \in E(G), i < j < k} z_iz_jz_k.
\end{equation}
Let $\text{wt}(f)$ be the \emph{weight} of $f$, defined to be the number of ones in its truth-table).
Then,
\begin{equation}
\beta(G) = \text{wt}(f_G).
\label{cubicwt}
\end{equation}
\label{cubic}
\end{lemma}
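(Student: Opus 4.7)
The plan is to establish the lemma by proving the pointwise identity $\mathrm{sign}(s_T) = (-1)^{f_G(\chi_T)}$ for every nonempty $T \subseteq V(G)$, with $\chi_T \in \F_2^n$ the indicator vector of $T$. Summing over $T$ then gives $\beta(G) = \mathrm{wt}(f_G)$, since the excluded case $T = \emptyset$ contributes $0$ on both sides.

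To compute $\mathrm{sign}(s_T)$ I would first write $g_i = X^{e_i} Z^{\Gamma_i}$, where $e_i$ is the $i$-th standard basis vector in $\F_2^n$ and $\Gamma_i$ is the $i$-th row of the adjacency matrix, and then iterate the Pauli commutation $Z^b X^a = (-1)^{\langle b, a\rangle} X^a Z^b$ across the product $\prod_{i \in T} g_i$ (taken in increasing vertex-label order). Induction on $|T|$ yields
\begin{equation*}
s_T = (-1)^{e(T)}\, X^{\chi_T}\, Z^{\Gamma \chi_T},
\end{equation*}
where $e(T)$ counts the edges of the induced subgraph $G[T]$, extracted from the corrections $\langle \Gamma_{i_j}, e_{i_{j'}} \rangle = [\{i_j, i_{j'}\} \in E]$ for $j < j'$, and $\Gamma \chi_T$ is reduced mod~$2$. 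At each coordinate $k$ with $(\chi_T)_k = (\Gamma \chi_T)_k = 1$ the local factor is $XZ = -iY$, so with $d_T(k) := |N(k) \cap T|$ and $y(T) := |\{k \in T : d_T(k) \text{ odd}\}|$ one obtains $s_T = (-1)^{e(T)} (-i)^{y(T)} P_T$ for a Pauli tensor $P_T$. The handshake lemma for $G[T]$ forces $y(T)$ even, so $(-i)^{y(T)} = (-1)^{y(T)/2}$ and $\mathrm{sign}(s_T) = (-1)^{e(T) + y(T)/2}$.

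All that remains is the combinatorial identity $e(T) + y(T)/2 \equiv f_G(\chi_T) \pmod 2$. Reading $f_G$ as the mod-$2$ enumeration of wedges (pairs of edges of $G$ sharing a common vertex) inside $T$, one obtains $f_G(\chi_T) \equiv \sum_{j \in T} \binom{d_T(j)}{2} \pmod 2$, and substituting $\binom{d}{2} = (d^2 - d)/2$ reduces the claim to
\begin{equation*}
\tfrac{1}{2} \sum_{j \in T} d_T(j)^2 \equiv y(T)/2 \pmod 2.
\end{equation*}
I expect this to be the main obstacle: a naive mod-$2$ computation gives only the trivial $\sum d_T(j)^2 \equiv \sum d_T(j) \equiv 0$ and loses the factor of $\tfrac12$. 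The fix is to lift to mod~$4$ using the arithmetic identity $d^2 \equiv [d \text{ odd}] \pmod 4$, which yields $\sum_{j \in T} d_T(j)^2 \equiv y(T) \pmod 4$; since $y(T)$ is even, halving is valid and produces the desired congruence. Combining the three ingredients gives $\mathrm{sign}(s_T) = (-1)^{f_G(\chi_T)}$ and hence the lemma.
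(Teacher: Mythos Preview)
Your proposal is correct and follows the same overall strategy as the paper's sketch: establish the pointwise identity that the global sign of $\prod_{j} g_j^{z_j}$ equals $(-1)^{f_G(z)}$, then sum. Where the paper merely illustrates this via small examples (a wedge on three vertices, a path on five) and asserts the general statement, you give a complete derivation: the Pauli commutation yields $s_T = (-1)^{e(T)} X^{\chi_T} Z^{\Gamma\chi_T}$, the $XZ=-iY$ collapse contributes $(-i)^{y(T)}=(-1)^{y(T)/2}$ (handshake making $y(T)$ even), and the mod-$4$ identity $d^2\equiv[d\text{ odd}]$ matches $e(T)+y(T)/2$ to the wedge count $\sum_j\binom{d_T(j)}{2}$. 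This is a genuine upgrade in rigor over the paper's sketch, but not a different route.

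One remark worth making explicit in your write-up: your reading of $f_G(\chi_T)$ as the mod-$2$ count of \emph{all} wedges (two-edge paths, any center) inside $T$ is the correct interpretation and is consistent with the paper's proof sketch and its worked examples. However, the displayed formula in the lemma, taken literally with $i<j<k$ and $\{i,j\},\{j,k\}\in E$, forces the shared vertex to carry the middle label, which is too restrictive (for the star $ST_3$ with center $1$ it would give $f_G=0$, whereas $\beta(ST_3)=1$). So you should state explicitly that you are summing over all two-edge paths regardless of which vertex is the center; the $i<j<k$ is only a normalisation of the monomial.
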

\begin{proof}
(Sketch) Consider the subgraph of $G$ at vertices $i,j,k$, where we assume $\{i,j\},\{j,k\} \in E(G)$.
One can confirm that $s = g_ig_jg_k$ is an operator with a global coefficient of $-1$ and
we can represent this in $f_G$ by the cubic term $z_iz_jz_k$.
Moreover this must be true for each such pair of edges in $G$. Likewise consider the five
vertices $i,j,k,l,m$ in $G$ where we assume that $\{i,j\},\{j,k\},\{k,l\},\{l,m\} \in E(G)$.
One can confirm that $s = g_ig_jg_kg_lg_m$ is an operator with a global coefficient of $-1$,
and the operators $s' = g_ig_jg_k$, $s'' = g_jg_kg_l$, and $s''' = g_kg_lg_m$ also have global
coefficients equal to $-1$. We represent this situation in $f_G$ by the sum of the cubic terms
$z_iz_jz_k + z_jz_kz_l + z_kz_lz_m$. The lemma follows from an elaboration of this argument.
Specifically the global coefficient of $s = \prod_{j=1}^n g_j^{z_j}$
is $(-1)^{f_G(z_1,z_2,\ldots,z_n)}$.
\end{proof}

\vspace{2mm}

Lemma \ref{cubic} allows us to obtain an equation for $\beta(C_n)$. We evaluate $\beta(C_n)$
by computer for $n = 3,4,5,6,7,8,\ldots $ to be $1,4,6,18,36,80,\ldots $, respectively and
use \cite{oeis} to find the sequence A051253 and the reference \cite{Cus02} where a
recurrence formula is provided for
$\beta(C_n) = \text{wt}(f_G) = \text{wt}(z_1z_2z_3 + z_2z_3z_4 + \ldots + z_{n-2}z_{n-1}z_n
+ z_{n-1}z_nz_1 + z_nz_1z_2)$, namely
\begin{equation}
\beta(C_{n+3}) = 2(\beta(C_{n+1}) + \beta(C_n) + 2^{n-1}).
\end{equation}

\vspace{3mm}

We also offer the following conjecture, based on the results of Table \ref{BigClass}.
\begin{conjecture}
The graph $H(G)$ is always connected except when $G$ is in the Kotzig orbit of the star graph,
$ST_n$, in which
case $H(G)$ splits into 3 disjoint components for $n = 2$, and $2$ disjoint
components for $n > 2$.
\end{conjecture}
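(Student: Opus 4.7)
The plan is to reduce the conjecture to the connectivity of an auxiliary ``sharing multigraph'' $M(G)$ attached to the stabilizer group, then handle the star orbit by direct computation, and finally attack the non-star case as the genuine main obstacle.

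The reduction works as follows. Let $M(G)$ be the graph with vertex set the $2^n-1$ non-identity stabilizers of $|G\rangle$, where $s_i\sim s_j$ whenever there is some position $k$ at which $s_i$ and $s_j$ both carry the same non-identity Pauli operator. The $2^{w_i-1}$ events attached to a single stabilizer $s_i$ form a clique of $H(G)$ (any two distinct $\pm 1$ outcome patterns disagree at some measured position), so they all lie in one component. Whenever $s_i$ and $s_j$ share a non-identity Pauli at position $k$, one can pick outcome patterns that disagree at $k$ but agree elsewhere to produce an $H(G)$-edge between the cliques; if no position of $s_i$ and $s_j$ carries a matching non-identity Pauli, no two of their events can be exclusive. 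Thus $H(G)$ and $M(G)$ have the same number of connected components, and the conjecture reduces to a statement about $M(G)$.

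For the star orbit I would choose the representative $K_n$ (LC-equivalent to $ST_n$). A direct calculation with $g_T=\prod_{i\in T}g_i$ on $K_n$ shows that the Pauli at position $k$ is $X$ if $k\in T$ and $|T|$ odd, $Z$ if $k\notin T$ and $|T|$ odd, $Y$ if $k\in T$ and $|T|$ even, and $I$ if $k\notin T$ and $|T|$ even. Thus the odd-$|T|$ stabilizers live entirely in $\{X,Z\}^n$ and the even-$|T|$ ones entirely in $\{I,Y\}^n$, and across the parity barrier no position ever carries a matching non-identity Pauli. For $n=2$ the three stabilizers $XZ$, $ZX$, $YY$ are pairwise non-sharing, giving three components. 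For $n\geq 3$ I would verify by a short case analysis on subsets of $[n]$ that within each parity class any two stabilizers are linked by a bridging subset (e.g.\ two odd $T,T'$ with $T\cup T'\neq [n]$ share a $Z$ outside $T\cup T'$, and any two even $T,T'$ with $T\cap T'\neq\emptyset$ share a $Y$ inside the intersection, while the remaining cases are bridged by a third $T''$), giving exactly two components.

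The hard direction is to show that $M(G)$ is connected whenever $G$ is not LC-equivalent to a star. Encoding each $g_T$ by the pair $(t,\Gamma t)\in \F_2^{2n}$, where $t$ is the indicator of $T$ and $\Gamma$ is the adjacency matrix mod $2$, the requirement that two stabilizers share a non-identity Pauli at position $k$ becomes $u_k=0$, $(\Gamma u)_k=0$, and the Pauli there being non-$I$, where $u=t+t'$. A natural attempt is the contrapositive: assume $M(G)$ splits into two non-empty classes $A,B$ with no cross-sharing and try to deduce that $\Gamma$ is LC-equivalent to the adjacency matrix of $K_n$. An alternative is induction on $n$: pick a vertex $v$ such that an appropriate LC-representative of $G\setminus v$ remains non-star, apply the inductive hypothesis to $G\setminus v$, and use the generator $g_v$ (or some product involving position $v$) as a bridge that splices the new stabilizers to the old ones. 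The main obstacle is exactly this final step: translating the purely linear-algebraic ``no cross-sharing'' hypothesis on the $\F_2^{2n}$-encoding back into a statement forcing $G$ into the star LC-orbit. Given the rigidity of that orbit (a single class up to isomorphism), one plausibly needs tools beyond raw combinatorics, for instance Bouchet's theory of isotropic systems or the cut-rank function that classifies LC-orbits, to extract the desired characterization.
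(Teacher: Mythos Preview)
The statement you are attacking is explicitly a \emph{conjecture} in the paper; the paper offers no proof. It records only computational verification for small $n$ (via the $\lambda_H$ column of Table~\ref{BigClass}) and a one-sentence heuristic: ``A potential way to prove it is to try to construct two connected components by adding codewords and show that this forces one to be in the $ST_n$ orbit.'' So there is no paper proof to compare against.

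Your proposal goes further than the paper. The reduction to the sharing graph $M(G)$ is correct: for a connected $G$ on $n\ge 2$ vertices every non-identity stabilizer has weight at least $2$ (the support of $\prod_{i\in T}g_i$ always contains $T$, and a single generator $g_k$ has weight $1+|\mathcal{N}(k)|\ge 2$), so at any fixed measured position one can realise either outcome while keeping the product constraint, and hence $H(G)$ and $M(G)$ have identical component structure. Your analysis of the $K_n$ representative is also correct and gives a clean structural reason for the split: odd-$|T|$ stabilizers lie in $\{X,Z\}^n$ and even-$|T|$ ones in $\{I,Y\}^n$, so no cross-sharing is possible. This is exactly the decomposition $H(K_n)=H(\mu_n)+H(\tau_n)$ that the paper uses in Section~\ref{Examples}, but the paper never states it in these graph-theoretic terms. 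Your bridging argument for connectedness within each parity class for $n\ge 3$ is essentially routine and checks out.

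Where you end up is precisely where the paper ends up: the converse direction --- disconnected $M(G)$ forces $G$ into the star orbit --- is the whole difficulty, and neither you nor the paper resolves it. You are honest about this, and your two outlined strategies (a contrapositive via the $\F_2^{2n}$ encoding, or induction on $n$ with an LC-adapted vertex deletion) are reasonable; the paper's one-line hint is essentially your contrapositive in coding-theoretic language. So your proposal is not a proof, but the reduction to $M(G)$ is a genuine sharpening of the problem that the paper does not make explicit.
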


The conjecture is verified, computationally, for $n = 2,3,4,5,$. A potential way to prove it is to try to construct two connected components by adding codewords and show
that this forces one to be in the $ST_n$ orbit.


\vspace{3mm}

By using code-theoretic techniques, we can lower and upper bound the size of cliques in $H(G)$. Remember that $S$ is
the stabilizer
group generated by $G$, comprising $2^n-1$ operators, $s_j$, and that $w_j =
w(s_j)$ is the
{\em weight} of operator $s_j$. Let $w_{\text{max}}(S) = \text{max}_{s_j \in
S}(w(s_j))$ and
$w_{\text{min}}(S) = \text{min}_{s_j \in S}(w(s_j))$. We have that
$w_{\text{max}}(S) = n$, for any $S$, because, for any graph $G$, then
$\prod_{i=0}^{n-1} g_i$
is always an operator of weight $n$. It is also well-known that the stabilizer
group for a graph state
characterizes a self-dual additive code over $\F_4$ of length $n$ whose minimum
distance, $d_H$, is
given by $w_{\text{min}}(S)$ \cite{DP06}. So, in subsequent discussions, we refer to
$w_{\text{min}}(S)$ and
$w_{\text{max}}(S)$ by $d_H$ and $n$, respectively.
Let ${\cal{C}}(H)$ be the set of maximal cliques in $H$, where each $c \in
{\cal{C}}(H)$ is a subset
of $V(H)$, the set of vertices of $H$, over which there is a clique in $H$. Let
$\tilde{\omega}(H)$ and
$\omega(H)$ be the minimum and maximum size of a maximal clique, respectively,
in graph $H$. So
$\tilde{\omega}(H) = \text{min}\{|c| \hspace{2mm} : \hspace{2mm} c \in
{\cal{C}}(H)\}$ and
$\omega(H) = \text{max}\{|c| \hspace{2mm} : \hspace{2mm} c \in {\cal{C}}(H)\}$.
Recall that $\omega(H)$ denotes the clique number of $H$.

\begin{theorem}\label{cliqbound}
For $H(G)$, the graph generated from the stabilizer set $S$,
\begin{equation}
\begin{array}{ll}
 \tilde{\omega}(H) \le |c| \le \omega(H), & \hspace{10mm} \forall c \in {\cal{C}}(H), \\
 & \text{where } \tilde{\omega}(H) = 2^{d_H - 1}, \omega(H) = 2^{n - 1}, \\
 & \text{and both upper and lower bounds are tight.}
\end{array}
\end{equation}
\end{theorem}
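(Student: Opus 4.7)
Plan. I prove the statement in two halves: the upper bound $|c|\le 2^{n-1}$ valid for every clique, and the lower bound $|c|\ge 2^{d_H-1}$ valid for every maximal clique, together with tightness of both.

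\textbf{Upper bound.} To each event $e$ coming from stabilizer $s_e$ of weight $w_e$ I attach the rank-$2^{n-w_e}$ projector
$$P_e \;=\; \prod_{k\in T_{s_e}}\frac{I + o^e_k\,O^{(k)}_{s_e}}{2},$$
and observe that two events are exclusive in $H(G)$ if and only if $P_eP_{e'}=0$. The probability of the outcome pattern of $e$ on the graph state is $\bra{G}P_e\ket{G}=2^{1-w_e}$. For any clique $c$, the pairwise orthogonality of the $P_e$ gives $\sum_{e\in c}P_e\le I$, hence
$$\sum_{e\in c}2^{1-w_e} \;=\; \bra{G}\sum_{e\in c}P_e\ket{G} \;\le\; 1.$$
Combined with $w_e\le n$ this forces $|c|\le 2^{n-1}$. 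The bound is tight: the stabilizer $s^\dagger=\prod_{i=1}^n g_i$ has weight exactly $n$ (each qubit $i$ contributes $X^{(i)}$ or $Y^{(i)}$ according to the parity of $\deg_G(i)$, never $I$), so $C_{s^\dagger}$ is a clique of $2^{n-1}$ vertices. Maximality of $C_{s^\dagger}$ follows because an extending event would need a stabilizer $s_l$ with $T_l\supseteq T_{s^\dagger}=[n]$ agreeing with $s^\dagger$ on every qubit, forcing $s_l=\pm s^\dagger$; only $s^\dagger\in S$, while the complementary parity class is not in $V(H)$.

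\textbf{Lower bound.} The crux is to show that for a maximal clique $c$ the inequality above saturates,
$$\sum_{e\in c}2^{1-w_e} \;=\; 1, \qquad \text{equivalently} \qquad \sum_{e\in c}P_e\ket{G}=\ket{G}.$$
Granting this, since $w_e\ge d_H$ one has $|c|\cdot 2^{1-d_H}\ge\sum_e 2^{1-w_e}=1$, so $|c|\ge 2^{d_H-1}$. I argue saturation by contradiction: if $\sum_{e\in c}P_e\ket{G}\ne\ket{G}$, the residual $\ket{\phi}\propto (I-\sum_{e\in c}P_e)\ket{G}$ is a nonzero unit vector lying in $\supp(I-\sum_{e\in c}P_e)$, derived from the graph state. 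The goal is then to produce an event $f$ whose projector $P_f$ sits entirely inside $\supp(I-\sum_{e\in c}P_e)$, as such an $f$ is exclusive to every $e\in c$ and contradicts maximality. A natural way to fish out $f$ is to measure the full-weight stabilizer $s^\dagger$ on $\ket{\phi}$, whose $+1$-eigenspace is spanned by the $2^{n-1}$ rank-one event projectors of $s^\dagger$; a dimension-counting argument comparing $\supp(I-\sum_{e\in c}P_e)$ to this eigenspace should deliver at least one such event vector lying in the residual space.

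\textbf{Tightness of the lower bound and principal obstacle.} A minimum-weight stabilizer $s^*\in S$ produces the clique $C_{s^*}$ of size $2^{d_H-1}$; by the argument used for $s^\dagger$, $C_{s^*}$ is maximal whenever no other nonidentity stabilizer has support disjoint from $T_{s^*}$. When a minimum-weight stabilizer with this property exists, tightness follows immediately; otherwise, one builds instead a mixed maximal clique of size $2^{d_H-1}$ by amalgamating partial cliques from several stabilizers, as exhibited for $H(P_3)$ in Section~5. The principal technical obstacle is the dimension-counting step in the lower bound: one must verify that the residual space $\supp(I-\sum_e P_e)$ actually contains a specific rank-one event vector of $s^\dagger$ rather than a generic superposition, which requires reconciling the tensor-product structure of the $P_e$ with the eigenstructure of $s^\dagger$. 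A cleaner route, which I would attempt next, is to exploit the self-dual additive $\F_4$-code structure associated with $S$ and apply a pigeonhole argument over Pauli cosets while preserving the link to the exclusivity graph.
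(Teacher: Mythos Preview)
Your projector approach is genuinely different from the paper's combinatorial argument, and the central identity you aim for --- namely $\sum_{e\in c}2^{1-w_e}\le 1$ for every clique, with equality for maximal cliques --- is in fact exactly what the paper's overlap/partition analysis establishes. The paper proves it by showing that every maximal clique arises from a subset $R\subset S$ of operators sharing a common Pauli overlap of size $\mu_R$, together with an ordered partition $(a_{s_j})$ of the $2^{\mu_R}$ overlap-assignments among the $s_j\in R$; the resulting clique has size $2^{-\mu_R-1}\sum_{s_j\in R}a_{s_j}2^{w(s_j)}$, from which both bounds and their tightness follow immediately.

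However, your route to that identity is broken. The claim $\bra{G}P_e\ket{G}=2^{1-w_e}$ is false in general. Take $G=ST_4$ (the star with centre $4$) and the weight-$4$ stabilizer $s=g_1g_2g_3=XXXZ$. Expanding $P_{xxxz}$ into $16$ Pauli terms, eight of them --- $I$, $XXII$, $XIXI$, $IXXI$, $XIIZ$, $IXIZ$, $IIXZ$, $XXXZ$ --- lie in the stabilizer, so $\bra{G}P_{xxxz}\ket{G}=\tfrac{1}{2}$, not $2^{1-4}=\tfrac{1}{8}$. Worse, other correct-parity events of $XXXZ$ (e.g.\ $xx\underline{x}\,\underline{z}$) have $\bra{G}P_e\ket{G}=0$, yet they are still vertices of $H(G)$ (the table entry $|V^H|=84$ for $ST_4$ confirms all $2^{w_j-1}$ parity-consistent outcomes are vertices). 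The failure is structural: $\bra{G}P_e\ket{G}$ equals $2^{1-w_e}$ only when no proper sub-product of the tensor factors of $s_e$ is itself a stabilizer, which is not guaranteed. Without this identity your upper-bound chain yields only $\sum_e 2^{-w_e}\le 1$ (from $\mathrm{rank}\,P_e=2^{n-w_e}$), hence $|c|\le 2^n$, missing the crucial factor of two. Your lower bound collapses for the same reason: the ``saturation'' $\sum_e\bra{G}P_e\ket{G}=1$ is \emph{not} equivalent to $\sum_e 2^{1-w_e}=1$, and even granting the former, terms of size up to $\tfrac{1}{2}$ (rather than at most $2^{1-d_H}$) prevent you from concluding $|c|\ge 2^{d_H-1}$.

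In short, the orthogonal-projector framework correctly encodes the exclusivity structure, but the graph state $\ket{G}$ is the wrong witness for the weight identity; the missing factor of two comes from the parity constraint on events, which is invisible to a rank or expectation argument on any single state. The paper recovers it combinatorially by tracking the common overlap of the operators contributing to a clique.
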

\begin{proof}
Consider an arbitrary set of two operators, $R = \{I XZYZ, -YXZYY\}$.
Then both operators have $XZY$ at their second,
third, and fourth tensor positions. We say that the two operators have a set
overlap of $XZY$ and this overlap is of size $\mu_R = 3$.
Let $R \subset S$ and consider an arbitrary splitting of the
assignments to $XZY$ of $xzy$, $\underline{x}z\underline{y}$, $\underline{x}zy$,
$x\underline{z}\underline{y}$, $\underline{x}\underline{z}\underline{y}$ for
operator
$I XZYZ$ and $x\underline{z}y$, $\underline{x}\underline{z}y$, $xz\underline{y}$
for operator $-YXZYY$. Then $H(G)$ contains a size-11 clique over
the vertices $Ixzyz,I\underline{x}zy\underline{z},I\underline{x}z\underline{y}z,
Ix\underline{z}\underline{y}z,I\underline{x}\underline{z}\underline{y}\underline{z}$
and
$yx\underline{z}yy,\underline{y}\underline{x}\underline{z}yy,yxz\underline{y}y,
\underline{y}x\underline{z}y\underline{y},y\underline{x}\underline{z}y\underline
{y},\underline{y}xz\underline{y}\underline{y}$.
It is straightforward to verify that
this clique cannot
be extended by adding another vertex from the subset of vertices in $H$
originating from the operators in $R$.
More generally, any splitting of the $2^3$ assignments to $XZY$ is
possible, each giving
a different clique. For our example, the operator $I XZYZ$ contributes
$5 \cdot 2^{w(I XZYZ) - \mu_R - 1}$ vertices to the clique and operator $-YXZYY$
contributes
$3 \cdot 2^{w(-YXZYY) - \mu_R - 1}$ vertices to the clique. So this clique is of size
$2^{-\mu_R - 1}(5.2^{w(I XZYZ)} + 3 \cdot 2^{w(-YXZYY)}) = 11$. It is a maximal clique if and only if
there is no $ R' \subset S$ such that $R \subset R'$ and $\mu_{R'} = \mu_R$.

If, instead, $R = \{I XZYZ, -YXZYY, XXZXI \}$, then the set
overlap is reduced to $XZ$ and of size $\mu_R = 2$. As an example, consider
just the partition
where $\underline{x}z$ is assigned to operator $I XZYZ$,
$xz,\underline{x}\underline{z}$ to
operator $-YXZYY$, and $x\underline{z}$ to operator $XXZXI $. Then this
assignment identifies
the vertices $I\underline{x}z\underline{y}z,I\underline{x}zy\underline{z}$, and
$\underline{y}xzyy,yxz\underline{y}y,yxzy\underline{y},\underline{y}xz\underline{y}\underline{y},
\underline{y}\underline{x}\underline{z}yy,y\underline{x}\underline{z}\underline{y}y,y\underline{x}\underline{z}y\underline{y},
\underline{y}\underline{x}\underline{z}\underline{y}\underline{y}$, and
$\underline{x}x\underline{z}xI,xx\underline{z}\underline{x}I$.
The operator $I XZYZ$ contributes
$1 \cdot 2^{w(I XZYZ) - \mu_R - 1}$ vertices to the clique, operator $-YXZYY$
contributes
$2 \cdot 2^{w(-YXZYY) - \mu_R - 1}$ vertices to the clique, and operator $XXZXI$
contributes
$1 \cdot 2^{w(XXZXI ) - \mu_R - 1}$ vertices to the clique. So this clique is of size
$2^{-\mu_R - 1}(1 \cdot 2^{w(I XZYZ)} + 2 \cdot 2^{w(-YXZYY)} + 1 \cdot 2^{w(XXZXI )}) = 12$. It
is a maximal clique if and only if
there is no $ R' \subset S$ such that $R \subset R'$ and $\mu_{R'} = \mu_R$.

Therefore, a strategy to find all maximal cliques in $H(G)$ is to find all subsets $R \subset S$
where $\mu_R > 0$, and such that there is no $ R' \subset S$ where $R \subset R'$ and
$\mu_{R'} = \mu_R$. Then for each of these subsets, $R$, and for each $|R|$-wise
ordered partition,
$p = \{a_{s_j} \hspace{2mm} | \hspace{2mm} s_j \in R\}$ of $2^{\mu_R}$,
where $\sum_{s_j \in R} a_{s_j} = 2^{\mu_R}$, and for each assignment of the
integers in set $\{0,1,\ldots,2^{\mu_R}-1\}$ according
to partition $p$, there is associated a maximal clique, $K_{R,p}$, of size,
\begin{equation}
|K_{R,p}| = 2^{-\mu_R - 1}\sum_{s_j \in R} a_{s_j}2^{w(s_j)}.
\label{maximalsize}
\end{equation}

Let $w^-(s_j) = \text{min}\{w(s_j) \hspace{2mm} | \hspace{2mm} s_j \in R \}$.
Then, for the specific $s_j \in R$ where $w(s_j) = w^-(s_j)$ we can minimise the
clique size by assigning
$a_j = 2^{\mu_R}$. So a special case of the above equation identifies a maximal
clique of minimum
size $2^{w^-(s_j)-1}$. Similarly
let $w^+(s_j) = \text{max}\{w(s_j) \hspace{2mm} | \hspace{2mm} s_j \in R \}$.
Then, for the specific $s_j \in R$ where $w(s_j) = w^+(s_j)$ we can maximise the
clique size by assigning
$a_j = 2^{\mu_R}$. A special case of the above equation identifies a maximal
clique of maximum
size $2^{w^+(s_j)-1}$. We arrive at both $\tilde{\omega}(H) = 2^{d_H - 1}$ and
$\omega(H) = 2^{n - 1}$ by
observing that all members of $S$ occur in at least one $R$ identified.
\end{proof}

\vspace{2mm}

In a step towards enumerating the maximal cliques of $H(G)$, we first enumerate
the number of maximal cliques of size given by (\ref{maximalsize}), as
generated by a fixed $R$ and $|R|$-wise ordered partition, $p$.
Let $R_k \subset R$ satisfy $R_k = \{s_j \in R \hspace{2mm} | \hspace{2mm} j <
k\}$, and let $b_k = \sum_{s_j \in R_k} a_j$.

\begin{lemma}
For fixed $R$ and $p$, the graph $H(G)$, contains $\# K_{R,p}$ maximal cliques
of size $|K_{R,p}|$, where
\begin{equation}
\# K_{R,p} = \prod_{s_j \in R} \binom{2^{\mu_R - b_j}}{a_j}.
\end{equation}
\end{lemma}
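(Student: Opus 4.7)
The plan is to identify the set of cliques $K_{R,p}$ (for fixed $R$ and $p$) with the set of ordered partitions of the $2^{\mu_R}$ overlap sign-patterns whose block sizes are the parts of $p$, and then evaluate the resulting count by a telescoping-binomial argument.

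First I would extract from the proof of Theorem~\ref{cliqbound} the observation that, once $R$ and $p=(a_j)_{s_j\in R}$ are fixed, the only remaining freedom in specifying a particular $K_{R,p}$ is an \emph{ownership map}: an assignment of each of the $2^{\mu_R}$ possible $\pm 1$-patterns on the $\mu_R$ overlap positions to an operator in $R$, in such a way that $s_j$ receives exactly $a_j$ patterns. Given such a map, the vertex set of $K_{R,p}$ is completely determined: for every pattern $\pi$ allotted to $s_j$, the clique contains all $2^{w(s_j)-\mu_R-1}$ extensions of $\pi$ to a full event in $S_j$ that is compatible with the global sign of $s_j$.

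Next I would verify that distinct ownership maps produce distinct vertex sets. Events in $S_j$ and $S_{j'}$ are distinguishable by their Pauli pattern: if $s_j$ and $s_{j'}$ are distinct stabilizer elements that already share the overlap (by the definition of $\mu_R$), they must differ in some non-overlap coordinate, and this difference shows up as a Pauli discrepancy in the corresponding events. Consequently, if two ownership maps disagree on some overlap pattern $\pi$, the two resulting cliques contain disjoint sets of events arising from $\pi$ and hence are distinct cliques. This yields the desired bijection between ownership maps and cliques of the form $K_{R,p}$.

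Finally, counting ordered partitions of a $2^{\mu_R}$-element set into successive blocks of sizes $a_{j_1},\ldots,a_{j_{|R|}}$ (in the order inherited from $R$) is routine: pick the first block from the full $2^{\mu_R}$ elements in $\binom{2^{\mu_R}}{a_{j_1}}$ ways, the next from the remaining $2^{\mu_R}-b_{j_2}$ elements, and so on, yielding the telescoping product $\prod_{s_j\in R}\binom{2^{\mu_R}-b_j}{a_j}$, which is the claimed formula. I expect the main subtlety to be the injectivity step, where one has to pin down cleanly that events from distinct $S_j$'s can never coincide (relying on the fact that the full Pauli pattern of the operator being measured is recorded in the event itself). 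Once that is handled, the count is a direct application of the multinomial identity.
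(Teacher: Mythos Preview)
Your proposal is correct and follows essentially the same approach as the paper: both reduce the count to the number of ways to distribute the $2^{\mu_R}$ overlap assignments among the operators in $R$ according to the partition $p$, then evaluate this by the standard telescoping-binomial product. The paper's proof is a single sentence plus a worked example, whereas you additionally spell out the bijection (in particular the injectivity step showing distinct ownership maps give distinct cliques); this is a welcome clarification but not a different method. Note also that your final formula $\binom{2^{\mu_R}-b_j}{a_j}$ matches the paper's worked example and is clearly the intended reading of the displayed expression.
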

\begin{proof}
The lemma follows immediately by counting the number of ways that one can assign to the
integers an $|R|$-way partition of $2^{\mu_R}$. For instance, for $\mu_R = 2$
and partition
$\{1,2,1\}$, one can assign the integers as $\{\{0\},\{1,2\},\{3\}\}$,
$\{\{0\},\{1,3\},\{2\}\}$, $\{\{0\},\{2,3\},\{1\}\}$, $\{\{1\},\{0,2\},\{3\}\}$,
$\{\{1\},\{0,3\},\{2\}\}$, $\{\{1\},\{2,3\},\{0\}\}$, $\{\{2\},\{0,1\},\{3\}\}$,
$\{\{2\},\{0,3\},\{1\}\}$, $\{\{2\},\{1,3\},\{0\}\}$, \\
$\{\{3\},\{0,1\},\{2\}\}$,
$\{\{3\},\{0,2\},\{1\}\}$, $\{\{3\},\{1,2\},\{0\}\}$ - in total
$\binom{2^2}{1} \times \binom{2^2-1}{2} \times \binom{2^2-3}{1} = 4 \times 3
\times 1 = 12$ ways.
\end{proof}

\vspace{5mm}

\vspace{3mm}

Although, in Theorem \ref{cliqbound},
we have obtained tight lower and upper bounds for the size of the maximal cliques of $H(G)$ in the
general case,
in terms of $d_H$, and $n$, respectively, it remains open to obtain tight lower and upper bounds for the
size of the maximal independent sets of $H(G)$ in the general case.
However we have obtained a lower bound, $\beta(G)$,
on $\al(H(G))$, although this bound is not tight in general. It also remains open to
provide equations for $|V^H|$ in the general case. Given that this paper highlights
the gap between $\al(H(G))$ and $2^n - 1$, it is particularly desirable to develop equations for
$\al(H(G))$ in the general case. Of course, it would be nice to find a formula linking $\al(H(G))$ and $\al(G)$ - however it should be noted that $\al(G)$ is not an invariant of the Kotzig orbit of $G$.

\begin{paragraph} {Acknowledgements} We would like to thank Mary Beth Ruskai for asking during the Quantum Information Workshop at the Centro de Ciencias de Benasque (July 2011) the question that motivated the results: Is there a connection between graph states and the graphs representing exclusivity structures studied in \cite{cabello10}? We thank the Associate Editor and a referee for the very useful comments that improved the readability of the paper. AC is supported by the Project No.\ FIS2011-29400 (Spain). Part of this work has been done while AC and SS where visiting the Department of Informatics at the University of Bergen. The financial support of the University of Bergen is gratefully acknowledged. Part of this work has been done while GS was visiting University College London. GS is supported by Ronald de Wolf's Vidi grant 639.072.803 from the Netherlands Organization for Scientific Research (NWO).

\end{paragraph}


\begin{thebibliography}{99}

\bibitem{ADLPBA11}
 E. Amselem, L. E. Danielsen, A. J. L\'opez-Tarrida, J. R. Portillo, M. Bourennane, and A. Cabello,
 Experimental fully contextual correlations,
 \emph{Phys. Rev. Lett.} \textbf{108}, 200405 (2012).

\bibitem{arratia04}
 R. Arratia, B. Bollob\'{a}s, and G. B. Sorkin,
 The interlace polynomial of a graph,
 \emph{J. Combin. Theory Ser. B} \textbf{92}, 199 (2004).

\bibitem{bouchet88}
 A. Bouchet,
 Graphic presentations of isotropic systems,
 \emph{J. Combin. Theory Ser. B} \textbf{45}, 58 (1988).

\bibitem{BBG12}
 J. Bri\"{e}t, H. Buhrman, and D. Gijswijt,
Violating the Shannon capacity of metric graphs with entanglement,
  arXiv:1207.1779 [quant-ph].

\bibitem{cabello08}
 A. Cabello, O. G\"{u}hne, and D. Rodr\'{i}guez,
 Mermin inequalities for perfect correlations,
 \emph{Phys. Rev. A} \textbf{77}, 062106 (2008).

\bibitem{CLMP09}
 A. Cabello, A. J. L\'opez-Tarrida, P. Moreno, and J. R. Portillo,
 Entanglement in eight-qubit graph states,
 \emph{Phys. Lett. A} \textbf{373}, 2219 (2009); \emph{ibid.} \textbf{374}, 3991 (2010).

\bibitem{cabello10}
 A. Cabello, S. Severini, and A. Winter,
 (Non-)contextuality of physical theories as an axiom,
 arXiv:1010.2163 [quant-ph].
 
 
\bibitem{CRSS96} 
A. R. Calderbank, E. M. Rains, P. W. Shor, N. J. A. Sloane,
Quantum Error Correction and Orthogonal Geometry
 \emph{Phys. Rev. Lett.} \textbf{78}, 405 (1997)


\bibitem{CLMW10}
 T.~S. Cubitt, D.~Leung, W.~Matthews, and A.~Winter,
 Improving zero-error classical communication with entanglement,
 \emph{Phys. Rev. Lett.} \textbf{104}, 230503 (2010).

\bibitem{Cus02}
 T. W. Cusick and P. Stanica,
 Fast evaluation, weights and nonlinearity of rotation-symmetric functions,
\emph{Discr. Math.} \textbf{258}, 289 (2002).

\bibitem{DP06}
 L. E. Danielsen and M. G. Parker,
 On the classification of all self-dual additive codes over GF(4) of length up to 12,
 \emph{J. Combin. Theory Ser. A} \textbf{113}, 1351 (2006).

\bibitem{DSW10}
 R. Duan, S. Severini, and A. Winter,
 Zero-error communication via quantum channels, non-commutative graphs and a quantum Lov\'{a}sz $\vartheta$ function,
 arXiv:1002.2514v2 [quant-ph]; to appear in \emph{IEEE Trans. Inf. Theory}.

\bibitem{GR01}
 C. Godsil and G. Royle,
 \emph{Algebraic Graph Theory}, Graduate Texts in Mathematics \textbf{207}
 (Springer-Verlag, New York, 2001).

\bibitem{gottesman97}
 D. Gottesman,
 \emph{Stabilizer Codes and Quantum Error Correction}, PhD thesis, Caltech, 1997; arXiv:quant-ph/9705052.

\bibitem{gunhe04}
 O. G\"{u}hne, G. T\'{o}th, P. Hyllus, and H. J. Briegel,
 Bell inequalities for graph states,
 \emph{Phys. Rev. Lett.} \textbf{95}, 120405 (2005).

\bibitem{hein03}
 M. Hein, J. Eisert, and H. J. Briegel,
 Multiparty entanglement in graph states,
 \emph{Phys. Rev. A} \textbf{69}, 062311 (2004).

\bibitem{korner}
 J. K\"{o}rner and A. Orlitsky,
 Zero-error information theory,
 \emph{IEEE Trans. Inf. Theory} \textbf{44}, 2207 (1998).

\bibitem{kotzig68}
 A. Kotzig,
 Eulerian lines in finite 4-valent graphs and their transformations,
 \emph{Theory of Graphs (Proc. Colloq., Tihany, 1966)}%
 (Academic Press, New York, 1968), p.~219.

\bibitem{ji10}
 Z. Ji, J. Chen, Z. Wei, and M. Ying,
 The LU-LC conjecture is false,
 \emph{Quantum Inf. Comput.} \textbf{10}, 97 (2010).

\bibitem{Lovasz79}
 L. Lov\'asz,
 On the Shannon capacity of a graph,
 \emph{IEEE Trans. Inf. Theory} \textbf{25}, 1 (1979).

\bibitem{MSS12}
 L. Man\v{c}inska, G. Scarpa, and S. Severini,
 New Separations in Zero-error Channel Capacity through Projective Kochen-Specker Sets and Quantum Coloring,
 To appear in \emph{IEEE Trans. Inf. Theory}, (2013).

\bibitem{mermin90}
 N. D. Mermin,
 Extreme quantum entanglement in a superposition of macroscopically distinct states,
 \emph{Phys. Rev. Lett.} \textbf{65}, 1838 (1990).

\bibitem{NDSC12}
 E. Nagali, V. D'Ambrosio, F. Sciarrino, and A. Cabello,
 Experimental observation of impossible-to-beat quantum advantage on a hybrid photonic system,
 \emph{Phys. Rev. Lett.} \textbf{108}, 090501 (2012).

\bibitem{vandennest04}
 M. Van den Nest, J. Dehaene, and B. De Moor,
 Graphical description of the action of local Clifford operations on graph states,
 \emph{Phys. Rev. A}, \textbf{69}, 022316 (2004).

\bibitem{NielsenChuang}
 M. A. Nielsen and I. L. Chuang,
 \emph{Quantum Computation and Quantum Information}
 (Cambridge University Press, Cambridge, 2000).

\bibitem{oeis}
 OEIS Foundation Inc.,
 \emph{The On-Line Encyclopedia of Integer Sequences},
 \url{http://oeis.org/A051253} (2011).

\bibitem{raussendorf03}
 R. Raussendorf, D. E. Browne, and H. J. Briegel,
 Measurement-based quantum computation with cluster states,
 \emph{Phys. Rev. A}, \textbf{68}, 022312 (2003).

\bibitem{sarvepalli10}
 P. Sarvepalli and R. Raussendorf,
 On local equivalence, surface code states and matroids,
 \emph{Phys. Rev. A} \textbf{82}, 022304 (2010).

\bibitem{shannon}
 C. E. Shannon,
 The zero-error capacity of a noisy channel,
 \emph{IRE Trans. Inform. Th.} \textbf{2}, 8 (1956).

\bibitem{SU97}
 E. R. Scheinerman and D. H. Ullman,
 \emph{Fractional Graph Theory}
 (John Wiley \& Sons, New York, 1997).

\bibitem{Shrikhande59}
 S. S. Shrikhande,
 The uniqueness of the $\mathrm{L}_2$ association scheme,
 \emph{Annals of Mathematical Statistics} \textbf{30}, 781 (1959).
 
\bibitem{sli01}
D. Schlingemann, R. F. Werner, Quantum error-correcting codes associated with graphs, 
\emph{Phys. Rev. A} \textbf{65}, 012308 (2001).

\end{thebibliography}
\end{document}